\documentclass[journal=jctcce,a4paper,manuscript=article]{achemso}

\usepackage{amsfonts}
\usepackage{amsmath}
\usepackage{amssymb}
\usepackage{amsthm}
\usepackage{braket}
\usepackage[american]{babel}
\usepackage{longtable}
\usepackage{float}
\usepackage{subfig}
\usepackage{graphicx} 
\usepackage{here}
\usepackage{multirow}
\usepackage{siunitx}
\usepackage{url}
\usepackage{xcolor}

\usepackage{hyperref}
\hypersetup{
    colorlinks = true,
    linkcolor = blue
    }
\usepackage[
        noabbrev,
        capitalise,
        nameinlink,
    ]{cleveref}

\newtheorem{theorem}{Theorem}
\newtheorem{lemma}{Lemma}

\setcitestyle{square}

\global\long\def\ket#1{\left|#1\right\rangle }
\global\long\def\bra#1{\left\langle #1\right|}

\global\long\def\ketbra#1#2{\ket{#1}\bra{#2}}

\title{Classification of electronic structures and state preparation for quantum computation of reaction chemistry}

\author{Maximilian M\"orchen}
\affiliation{ETH Zurich, Department of Chemistry and Applied Biosciences,\\ Vladimir-Prelog-Weg 2, 8093 Zurich, Switzerland}

\author{Guang Hao Low}
\affiliation{Microsoft Quantum, One Microsoft Way, Redmond, WA 98052, USA}

\author{Thomas Weymuth}
\affiliation{ETH Zurich, Department of Chemistry and Applied Biosciences,\\ Vladimir-Prelog-Weg 2, 8093 Zurich, Switzerland}

\author{Hongbin Liu}
\affiliation{Microsoft Quantum, One Microsoft Way, Redmond, WA 98052, USA}

\author{Matthias Troyer}
\affiliation{Microsoft Quantum, One Microsoft Way, Redmond, WA 98052, USA}
\email{matthias.troyer@microsoft.com}

\author{Markus Reiher}
\affiliation{ETH Zurich, Department of Chemistry and Applied Biosciences,\\ Vladimir-Prelog-Weg 2, 8093 Zurich, Switzerland}
\email{mreiher@ethz.ch}

\date{September 9, 2024}

\begin{document}
\maketitle

\begin{abstract}
Quantum computation for chemical problems will require the construction of guiding states with sufficient overlap with a target state. Since easily available and initializable mean-field states are characterized by an 
overlap that is reduced for multi-configurational electronic structures and even vanishes with
growing system size, we here investigate the severity of state preparation for reaction chemistry. We
emphasize weaknesses in current traditional approaches (even for weakly correlated molecules) and highlight the advantage of quantum phase estimation algorithms. 
An important result is the introduction of a new classification scheme for electronic structures based on orbital entanglement information. 
We identify two categories of multi-configurational molecules. 
Whereas class-1 molecules are dominated by very few determinants and often found in reaction chemistry, class-2 molecules do not allow one to single out a reasonably sized number of important determinants. The latter are particularly hard for traditional approaches and an ultimate target for quantum computation. Some open-shell iron-sulfur clusters belong to class 2. We discuss the role of the molecular orbital basis set and show that true class-2 molecules remain in this class independent of the choice of the orbital basis, with the iron-molybdenum cofactor of nitrogenase being a prototypical example. We stress that class-2 molecules can be build in a systematic fashion from open-shell centers or unsaturated carbon atoms.
Our key result is that it will always be possible to initialize a guiding state for chemical reaction chemistry in the ground state based on initial low-cost approximate electronic structure information, which is facilitated by the finite size of the atomistic structures to be considered. 

\end{abstract}

\section{Introduction}
The accurate simulation of quantum systems has been the original motivation for the development of the concept of a quantum computer \cite{Feynman1982}. Today we understand that simulating molecules and materials will not only be the first but the main  
application of quantum computers \cite{Hoefler2023}. 
Quantum computers are tremendously more complex than classical ones, and applications where they can outperform classical ones therefore need to focus on small-data problems with significant, and ideally exponential quantum speedup \cite{sho_polynomialtime_1997,2009AharonovJones,Abrams1997HubbardSimulation,Harrow2009}. 
The accurate simulation of molecules, which is of exponential classical complexity but does not require much data input or output is hence ideally suited.

However, more than just efficient simulation of a molecule is required to solve routine chemistry problems, of which catalytic processes have been highlighted as particularly interesting \cite{Reiher2016Reaction,vonBurg2020carbon,goings2022}. 
One of the most important tasks, the determination of ground state properties of a molecule by quantum phase estimation (QPE), requires in addition an approximate ground state $\ket{\psi_{\mathrm{approx}}}$ that has substantial overlap with the true ground state $\ket{\psi_{\mathrm{g}}}$~\cite{wg_efficient_2015}. The overlap determines the success probability 
\begin{equation}
\label{eq:p_success}
p_{\mathrm{success}} = 
\left| 
\langle \psi_{\mathrm{approx}} \psi_{\mathrm{g}} \rangle
\right|^2
\end{equation}
of the quantum algorithm reaching the ground state. 

However, Anderson~\cite{Anderson1967Catastrophe} famously noted an `orthogonality catastrophe' in many-body systems, wherein the overlap between the true ground state and an approximate product ansatz decreases exponentially with problem size. Although this is to be understood as being valid for extended systems in the asymptotic limit, not for molecules, it has been shown that vanishing overlaps can already be found for rather small finite-sized iron-sulfur clusters \cite{lee2023}, as they occur in electron transfer processes and reaction centers of metalloproteins.
The vanishing-overlap behavior is generic for growing system sizes, and then, the runtime grows exponentially also in the quantum case. Indeed, finding the ground states of generic quantum systems is QMA-complete~\cite{Aharonov2009QMALine,2002KitaevBook}; the quantum analog of NP-complete. 
Overall, this has suggested that there might not be an exponential quantum speedup for chemistry~\cite{lee2023}. However, despite the finite-cluster results that support this view, it strictly holds only in the thermodynamic limit. By contrast, we here consider problem sizes that relate to chemical reaction chemistry and are, due to the local nature of chemical reactions, limited to a comparatively small number of atoms.

In molecular strong-correlation
cases, the standard approach is to limit the wave function ansatz to an reduced-size $N$-(active)-orbital space, in which a subset $\eta$ of electrons can be correlated to yield a so-called full configuration interaction wave function in this limited space\cite{gao2024}.
The important question therefore is not the scaling of the overlap in the limit $\eta,N\to\infty$, 
but its value for typical molecules with moderate $N$. Wherever large overlap is found for a classically intractable instance, quantum computers retain the exponentially growing advantage for the value of $N$ in that instance. More important for quantum utility might be the fact that the rigorous error control of QPE algorithms already presents a key advantage over traditional approaches to enhance predictability of conclusions from simulations.\cite{Reiher2016Reaction,liu2022,vonBurg2020carbon}
However, this advantage only plays out if initial state preparation does not compromise the quantum simulation by significantly enlarging the runtime due to low overlap.

In this paper, we first clarify the situation for routine reaction chemistry applications and emphasize that elementary reaction steps involve rather few atoms so that a reliable modelling of them requires on the order of about 100 atoms. This naturally limits the effects of the orthogonality catastrophe and allows us to then 
analyze approximate ground-state wave functions of electronic structures that feature typical multi-configurational problems. As a result, we introduce two classes of multi-configurational molecules based on single-orbital entropy analyses --- alongside a third class for weakly correlated electronic structures --- and assess the overlap of easily constructible trial states for them. Finally, we comment on state preparation that can cope with the two multi-configurational classes
(and discuss procedures for this in the appendix).

\section{Quantum Computing for Reaction Chemistry}

Understanding chemical reaction chemistry requires the elucidation of reaction mechanisms and the microkinetic modelling of concentration fluxes through these chemical reaction networks of stable intermediates and transition states (see Refs.\ \citenum{Bensberg2023a, Bensberg2024} for an example). 
This can be accomplished by quantum mechanical optimizations of the underlying molecular structures and the assignment of an energy to these species, possibly in a largely automated fashion in order to cover as much of the relevant reaction space as possible (see Refs.\ \citenum{Sameera2016, Dewyer2017, simm2019, steiner2022, Ismail2022, Wen2023, Margraf2023} for reviews).
Reliability and predictability --- which are key for
chemical optimization and design attempts --- depend then inevitably on the 
accuracy of the energies assigned to the species in the reaction network. However, the precise accuracy of an individual energy for a specific molecular structure will, in general, not be known from any traditional quantum chemistry approach, only statistical knowledge may guide the reliability assessment\cite{reiher2022}.
We have therefore advocated \cite{Reiher2016Reaction,liu2022,vonBurg2020carbon}
the well-defined error estimation of QPE algorithms in fault-tolerant quantum computation to be
a key advantage over traditional quantum chemical calculations. 

It is important to note that traditional approaches in quantum chemistry do not provide information about the accuracy of a specific result obtained for a specific molecular structure.\cite{reiher2022} While some parameters can actually be well controlled in such calculations (such as the size of the one-electron (orbital) basis set through extrapolation to the orbital basis limit), it is the ansatz of the approximate many-electron wave function that introduces an error that cannot be assessed in a system-focused fashion by a method-inherent approach (and a comparison to a more accurate approximation is hardly available; if it were, one would resort to these more accurate data, but again not knowing the precise accuracy obtained). By contrast, QPE provides a controllable error for a specific calculation. That is, it provides an error estimate for the specific molecular structure under consideration and this error can be systematically made smaller as the runtime is increased.

Although the focus of quantum computing for chemical problems has often (and for good reason) been skewed to truly challenging electronic structure problems (such as those in Refs.\ \citenum{Reiher2016Reaction,goings2022}), 
a true chemical breakthrough will only be accomplished if quantum computation becomes routinely available for chemical problems. 

The error estimation approach of QPE implies that we can obtain accurate relative energies for chemical reactions and activation barriers from total electronic energies with controllable error (surely, vibrational and other thermal contributions also matter, but it is foremost the potentially inaccurate electronic structure model that must be fixed before further corrections are worth being considered). 

However, since relative electronic energies are not directly accessed with QPE, we cannot benefit from systematic error compensation, which is key to traditional approaches. A severe consequence of this observation is that the QPE advantage can be compromised by the orthogonality catastrophe because the overlap of a target state with a low-cost wave function such as the mean-field  (Hartree--Fock) determinant
will decay for increasing molecule sizes. 

While this only holds true for growing system sizes, it must be understood that reaction chemistry is a local phenomenon, usually involving the breaking of one or two bonds at a time. Hence, atomistic models describing bond dissociation and formation can be confined to a limited, comparatively low number of atoms. 
Large molecular structures exhibit a huge fraction of atoms that are spectators and can be efficiently described by embedding methods (see, for instance, Ref.\ \citenum{Erakovic2024}). 
As a result, a sufficiently large structural model of a reactive event only needs to comprise about 100 to 200 atoms. Many examples exist in the computational chemistry literature that demonstrate this point and we may simply refer to the modeling approach developed by Siegbahn\cite{siegbahn2011}
and to the idea of quantum-classical hybrid modeling in general
\cite{senn2007}.
Therefore, the reaction chemistry problem does not suffer from the problematic scaling of a vanishing overlap with increasing system size.

\section{A New Classification Scheme for Electronic Structures}

Even if we can assume a fixed molecule size, the overlap of the Hartree--Fock (HF) wave function with the target state might still be small (say, 0.1), but this would still be sufficient to run QPE algorithms efficiently.
However, one may encounter molecular problems with less than 200 atoms that exhibit strongly correlated ground states. Whereas such systems are truly scarce in thermal equilibrium chemistry, excited states, relevant to photophysics and photochemistry, and open-shell transition metal clusters often feature strong electron correlation. Hence, such situations require a rigorous analysis of the guiding state preparation problem. For this reason, we begin with the introduction of three categories of electronic structures, including two different classes of multi-configurational molecules (see Table \ref{classes}).

\begin{table}
    \centering
    \caption{\label{classes}Definition of three classes of electron-correlation problems according to the contribution of determinants to the FCI wave function expansion (middle column). The last column provides a description in terms of standard quantum chemical terminology for electron correlation.}
    \begin{tabular}{lll}
        \hline\hline
        class & characterization & correlation types\\
        \hline
        0 & HF determinant features high weight & exclusively dominated by \\
          &                                     & dynamic correlation \\[2ex]
        1 & few (far less than 100) determinants & static correlation; few \\
          & have similar non-negligible weight        & orbitals are important \\[2ex]
        2 & no single or small set of determinants & strong static correlation; \\
          & have a weight significantly            & all valence orbitals are \\
          & larger than that of other determinants & strongly entangled \\
        \hline\hline
    \end{tabular}
\end{table}

Class 0 collects all cases, in which the HF determinant is an excellent starting point for constructing a reliable approximation to the electronic wave function, as it is accomplished by traditional single-reference coupled cluster theory.
In class 1, we find those cases that are dominated by a few Slater determinants with significant weight.
Such situations may be found in stable intermediates, but they more often occur during a reaction because of the bond breaking and bond forming processes in an elementary reaction step and can therefore be found in transition state structures.
Hence, these are cases routinely encountered in reaction mechanisms. 
In class 2, we find extreme cases of strong static electron correlation, where no determinant features a weight in the determinantal expansion of the ground state that is significantly larger than that of any other determinant. 

While this classification has so far been based on somewhat hand-waving intuition, we wish to point out that
this classification scheme is mirrored in entropy measures for orbital entanglement that we depict in so-called threshold diagrams \cite{Stein2016}, as we shall see in the following.
In these threshold diagrams, the $x$-axis lists single-orbital entropies taken as thresholds to count how many orbitals are assigned this entropy or a larger value of it (the number given on the $y$-axis; see the appendix for the definition of these entropy measures and further details on threshold diagrams). 

Fig.\ \ref{fig:ideal_thresholds-results} shows, in the top panel, idealized orbital-entropy threshold diagrams for molecules of class 0, 1, and 2.
For class-0 molecules, no entropy is larger than a given threshold, and hence, they are interpreted as single-configurational. This is expressed by the initial steep drop in orbital number in these diagrams.
For class-1 electronic structures, 
plateaus emerge, which indicate orbitals with a similar degree of orbital entanglement that is exploited to choose active spaces in the AutoCAS algorithm \cite{Stein2016}. For class-2 molecules, no plateaus can be identified as almost all valence orbitals are strongly entangled.
The lower panel in Fig.\ \ref{fig:ideal_thresholds-results} presents explicit threshold diagrams for water, the transition state structure of cyclobutadiene isomerization, and the active site of iron-molybdenum nitrogenase, i.e., the FeMo-cofactor (FeMoco)
as examples for the three classes.

\begin{figure}
    \centering
    \includegraphics[width=\textwidth]{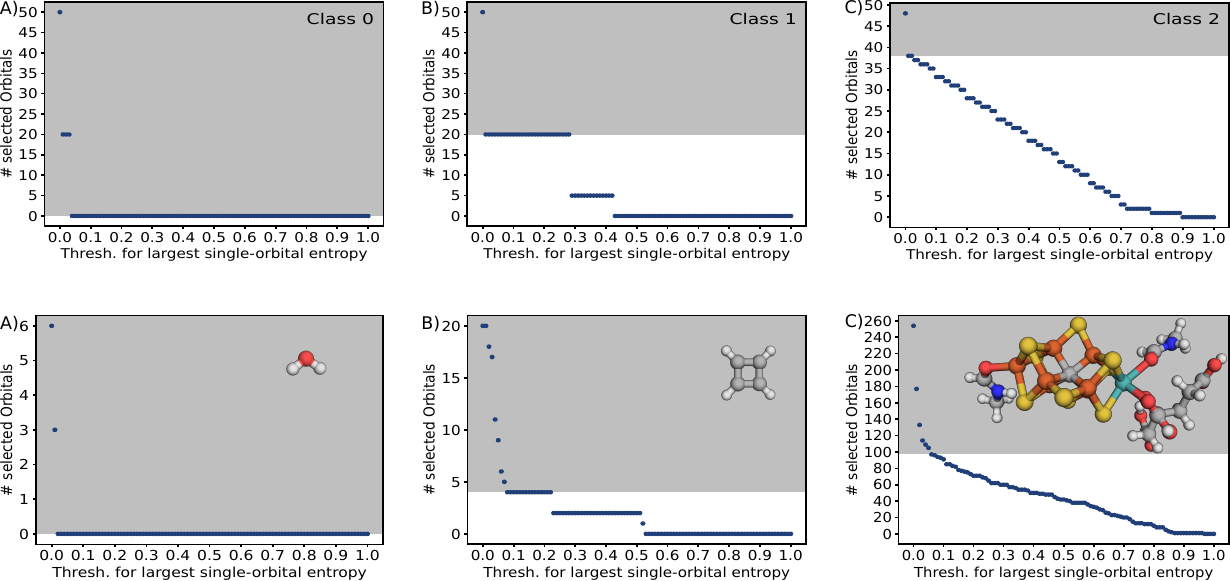}
    \caption{Orbital-entropy threshold diagrams for the different correlation classes of electronic structures obtained in all-valence complete active space calculations. Top panel: Idealized diagrams that characterize class-0, class-1, and class-2 molecules.
The gray rectangle denotes the number of orbitals that are assigned to be dynamically correlated (i.e., weakly entangled) and that would therefore be excluded from an active orbital space of strongly correlated orbitals.
Lower panel: Calculated diagrams obtained for water as a class-0 case, the transition state structure of cyclobutadiene isomerization as a class-1 case, and FeMoco (in a split-localized basis) as a class-2 case.
Color code: yellow denotes sulfur atoms, orange iron atoms , red oxygen atoms, gray carbon atoms, dark blue nitrogen atoms, white hydrogen atoms, and turquoise molybdenum.
    See the appendix for details on the computational methodology.}
    \label{fig:ideal_thresholds-results}
\end{figure}

We note that the assignment of the electronic structure of a molecule to one of these three classes cannot be stringent, and borderline cases will exist.
Still, the introduction of these three classes serves the purpose to address different challenges for the potential benefit of applying quantum computation to obtain electronic energies and for the state preparation procedure.
Whereas molecules in class 0 can always be initialized with the mean-field Hartree--Fock state, molecules in class 1 will benefit from the initialization of a few, rather than one determinant in order to reduce the effort for the QPE step.
Molecules in class 2 will require significantly more effort for state preparation, which could require initialization of millions of determinants or of some other format such as matrix product states (MPSs).

In general, one will not know to what class a molecular structure will be assigned to (except for prototypical, well-understood systems; but consider those in Ref.\ \citenum{Stein2017b} as counter example) since the composition of the target state is not known a priori. 
However, one may carry out exploratory configuration interaction (CI) singles-doubles or small-bond-dimension density matrix renormalization group calculations to gain some initial information with limited effort. In fact, the latter choice has been demonstrated to work remarkably well for the selection of active orbitals spaces \cite{Stein2016,Stein2016b,Stein2017,Stein2017b,Stein2019}.

\section{State Preparation Complexity}
From the point of view of initial state preparation for QPE, electronic structures belonging to class 0 are trivial to prepare since the mean-field Hartree--Fock state will solely dominate the determinant expansion of the target state. Although multi-configurational molecules are more challenging in this respect, we note that all class-1 molecules are still straightforward to intialize because their electronic structures are dominated by a rather small number of determinants. Only class-2 molecules are challenging and hence interesting to study from this point of view.

Recently it has been argued\cite{lee2023} that quantum computation for chemical problems can be severely hampered by the construction of guiding states, even for rather small molecules that show strong electronic correlation.
These small molecules, however, namely iron-sulfur clusters, are likely to belong to class 2.
We will therefore study those systems in this work, but note that they do not play a key role in general reaction chemistry as reaction centers (for instance, they are sensitive against reactive oxygen species \citenum{Stiebritz2012})
and rather operate as biomolecular electron-transfer relays (i.e., as 'sponges' that take up or release electrons in enzymes).
We also note that the findings in Ref.\ \citenum{lee2023} were in contrast to the earlier claims by Tubman et al.\cite{tubman2018postponing} that the Hartree--Fock determinant or a multi-configurational guiding state will be efficient for QPE algorithms.

If the weight of the Hartree--Fock determinant in the determinantal expansion of the total state is larger than about 0.9, the wave function is considered to be single-configurational\cite{jiang2012, langhoff1974, lee1989, sears2008a, sears2008b} (and all remaining correlation is called dynamic correlation in quantum chemistry).
For molecular systems with this property (those that do neither belong to class 1, nor to class 2 and may therefore be collected in class 0), it was emphasized\cite{lee2023} that standard coupled cluster theory on classical computers already provides sufficiently accurate results. 
While coupled cluster theory can provide results close to FCI for small molecules such as dinitrogen considered in Ref.\ \citenum{lee2023} where one can push the excitation rank to five- and even six-fold excitations (albeit in a double-zeta basis and for a frozen core\cite{Chan2004}), this is not at all possible for large molecules.
For large molecules of class 0, Ref.\ \citenum{lee2023} claims the standard CCSD(T) model to be a  sufficiently accurate model, although explicit knowledge about its accuracy in a specific case will not be known.
On the contrary, even for closed-shell systems the CCSD(T) error may be insufficient for reliable kinetic modeling \cite{Karton2024} and for transition-metal complexes one may expect errors that are on average 3 kcal/mol \cite{DeYonker2007}. 
Moreover, large molecules can only be treated by local coupled cluster variants which come with new approximations and thresholds\cite{riplinger2016,Qianli2018} whose effect on growing system sizes remains unclear (see, for example, deficiencies in the DLPNO-CCSD(T) model \cite{altun2021} to which Ref.\ \citenum{lee2023} refers to for large molecules such as small proteins).
Again, QPE with rigorous error estimation for fixed-size reactive molecular (sub)systems promises a way out of this situation.

In the following, we investigate the size of active orbital spaces for molecules belonging to the three classes of orbital entanglement, which then allows us to discuss the severeness of the guiding state preparation problem for (reaction) chemistry.
Determinant weights (i.e., the squares of CI expansion coefficients $c_i$) and the single-orbital entropies are taken as descriptors for the analysis of the electronic wave function.
We investigate how much information on the target state can already be extracted from the approximate MPS generated for the selection of the active orbital space with the AutoCAS algorithm\cite{Stein2016}. 

\subsection{Class-0 Cases}

For comparison, we show results for the class-0 case of water\cite{Hoy1979} in Table \ref{tab:class1}.
Note that the wave function expansion of the class-0 water molecule is dominated solely by the HF determinant. 
In this case, it is neither possible nor sensible to define an active space.
Although the corresponding HF energy is not of sufficient accuracy for chemical applications, 
the HF determinant is clearly very well suited for guiding-state preparation (first row in Table \ref{tab:class1}).
Class-0 cases are therefore trivial in this respect.

\subsection{Class-1 Cases}

We evaluated the electronic structure for molecules that will show multi-configurational character and that we may expect to find in class 1 according to previous results: N$_2$O$_4$\cite{pulay1988}, C$_4$H$_4$\cite{lyakh2011}, O$_3$\cite{pulay1988}, $^1$C$_6$H$_4$\cite{evangelista2012}, $^3$C$_6$H$_4$\cite{evangelista2012}, $^1[$Fe(NO)(CO)$_3]^-$\cite{sayfutyarova2017} and $^3[$Fe(NO)(CO)$_3]^-$\cite{sayfutyarova2017}. 
Table \ref{tab:class1} shows the results of the initial orbital selection based on approximate density matrix renormalization group (DMRG) data for the full valence space (left hand side of the arrow in Table \ref{tab:class1}).

\begin{table}[htp!]
    \centering
    \begin{tabular}{c|c c c c  c c c c}\hline\hline
System & CAS(e, o)$^\text{ini}$ & $N_d^\text{ini}$ & $A_w^\text{ini}$ & $\rightarrow$ & CAS(e, o)$^\text{fin}$ & MPS overlap & $N_d^\text{fin}$ & $A_w^\text{fin}$ \\
\hline                                                         
H$_2$O & (8, 6)       & 1 & 0.996 & $\rightarrow$ & (0, 0) & --- & 1 & --- \\
\hline                                      
\hline                                      
                     && 1 & 0.811 &&&                                 & 1 & 0.905 \\
N$_2$O$_4$ & (34, 24) & 2 & 0.837 & $\rightarrow$ & (6, 6) & 1.000000 & 2 & 0.928 \\
                     && 4 & 0.843 &&&                                 & 3 & 0.940 \\
\hline                                      
                     && 1 & 0.443 &&&                                 & 1 & 0.465 \\
C$_4$H$_4$ & (20, 20) & 2 & 0.886 & $\rightarrow$ & (4, 4) & 1.000000 & 2 & 0.929 \\
                     && 4 & 0.919 &&&                                 & 4 & 0.971 \\
\hline                                      
                    && 1 & 0.818 &&&                                  & 1 & 0.868 \\
$^1$O$_3$ & (18, 12) & 2 & 0.903 & $\rightarrow$ & (4, 5)  & 1.000000 & 2 & 0.981 \\
                    && 4 & 0.913 &&&                                  & 3 & 0.991 \\
\hline                                      
                         &&  1 & 0.611 &&&                                 & 1 & 0.682 \\
$^1$C$_6$H$_4$ & (28, 28) &  2 & 0.844 & $\rightarrow$ & (6, 6) & 1.000000 & 2 & 0.934 \\
                         && 13 & 0.871 &&&                                 & 3 & 0.939 \\
\hline                                      
                         && 1 & 0.872 &&&                                 & 1 & 0.899 \\
$^3$C$_6$H$_4$ & (28, 28) & 2 & 0.883 & $\rightarrow$ & (8, 8) & 1.000000 & 2 & 0.916 \\
                         && 7 & 0.890 &&&                                 & 6 & 0.950 \\
\hline                                      
                                 &&   2 & 0.272 &&&                                   &    1 & 0.769 \\
$^1[$Fe(NO)(CO)$_3]^-$ & (50, 46) &   6 & 0.511 & $\rightarrow$ & (10, 12) & 0.999936 &    7 & 0.805 \\
                                 && 104 & 0.700 &&&                                   &  325 & 0.900 \\
\hline                                      
                                 && 1 & 0.585 &&&                                    &   1 & 0.833 \\
$^3[$Fe(NO)(CO)$_3]^-$ & (50, 46) & 2 & 0.618 & $\rightarrow$ & (10, 12)  & 0.999999 &   5 & 0.860 \\
                                 && 9 & 0.700 &&&                                    &  45 & 0.900 \\
\hline\hline
    \end{tabular}
    \caption{The \textit{initial} ('ini') and \textit{final} ('fin') active space information for class-1 molecules and for water (class 0, first row). 
    $N_d^\text{ini}$ and $A_w^\text{ini}$ measure the number of significantly contributing Slater determinants and their corresponding accumulated weight in the approximate MPS obtained for the \textit{initial} (full-valence) complete active space (CAS), respectively.
    'MPS overlap' is evaluated for the approximate and converged MPSs for the \textit{final} active space.
    $N_d^\text{fin}$ corresponds to the number of determinants to evaluate the accumulated weight $A_w^\text{fin}$, also for the \textit{final} active space.
    Note that the determinants counted in $N_d^\text{ini}$ and $N_d^\text{fin}$ are not necessarily the same. 
    }
    \label{tab:class1}
\end{table}

Apart from the water example, all data in Table \ref{tab:class1} classify almost the other molecules as multi-configurational (more than one determinant contributes to the wave function, and often far less than 10 determinants are required to reach an accumulated weight of above 0.92). Some molecules have a weight of the HF determinant close to 0.9 ($^1$O$_3$, $^3$C$_6$H$_4$, and both Fe nitrosyl complexes) or even slightly larger (N$_2$O$_4$).
Clearly, all molecules in Table \ref{tab:class1} are of class 1 (except water, of course) and can be considered to represent typical multi-configurational cases of a type also encountered in reaction chemistry. 
Still, it is obvious that all equilibrium structures can be initialized by the HF determinant alone. 
This is in line with findings by Tubman et al.\cite{tubman2018postponing}
For the C$_4$H$_4$ transition state, the HF determinant does not have a significant weight, but instead two determinants arising by single excitations from the HF determinant 
yield an accumulated weight that is larger than 0.9 (this weight can also be accomplished by combining both determinants into a single configuration state function). 
For state preparation, one may simply initialize the few determinants that enhance the accumulated weight to a value of 0.92 and higher.
We note that all of these class-1 molecules show a typical plateau structure in the threshold diagrams defined in Ref.\ \citenum{Stein2016}. 

As a large iron-centered porphyrine complex, one would expect the complex abbreviated as Fe-Porph (see the inlay in Fig.\ \ref{fig:fe_porph} for its molecular structure) to be a class-2 system.  
But even such potential class-2 systems can actually be class-1 molecules,
for which Fe-Porph turned out to be an example. 
Due to the large number of valence orbitals, the active space can only be selected through the large active space protocol introduced in Ref.\ \citenum{Stein2019}. 
The threshold diagram in Fig.\ \ref{fig:fe_porph} (left) suggests a relatively small active space of only 10 orbitals, which is of the same size as that of other class-1 single-center metal complexes. 
We validated the active space choice by optimizing a smaller active space of 30 orbitals for Fe-Proph with DMRG. 
The resulting threshold and orbital entanglement diagrams are shown in Fig.\ \ref{fig:fe_porph} (right).
Orbital entanglement diagrams deliver a graphical representation of the orbital entropies for each orbital arranged as a circle. All orbitals are pairwise connected by a line whose thickness encodes the value for the corresponding  mutual information.

\begin{figure}
    \centering
    \includegraphics[width=0.45\linewidth]{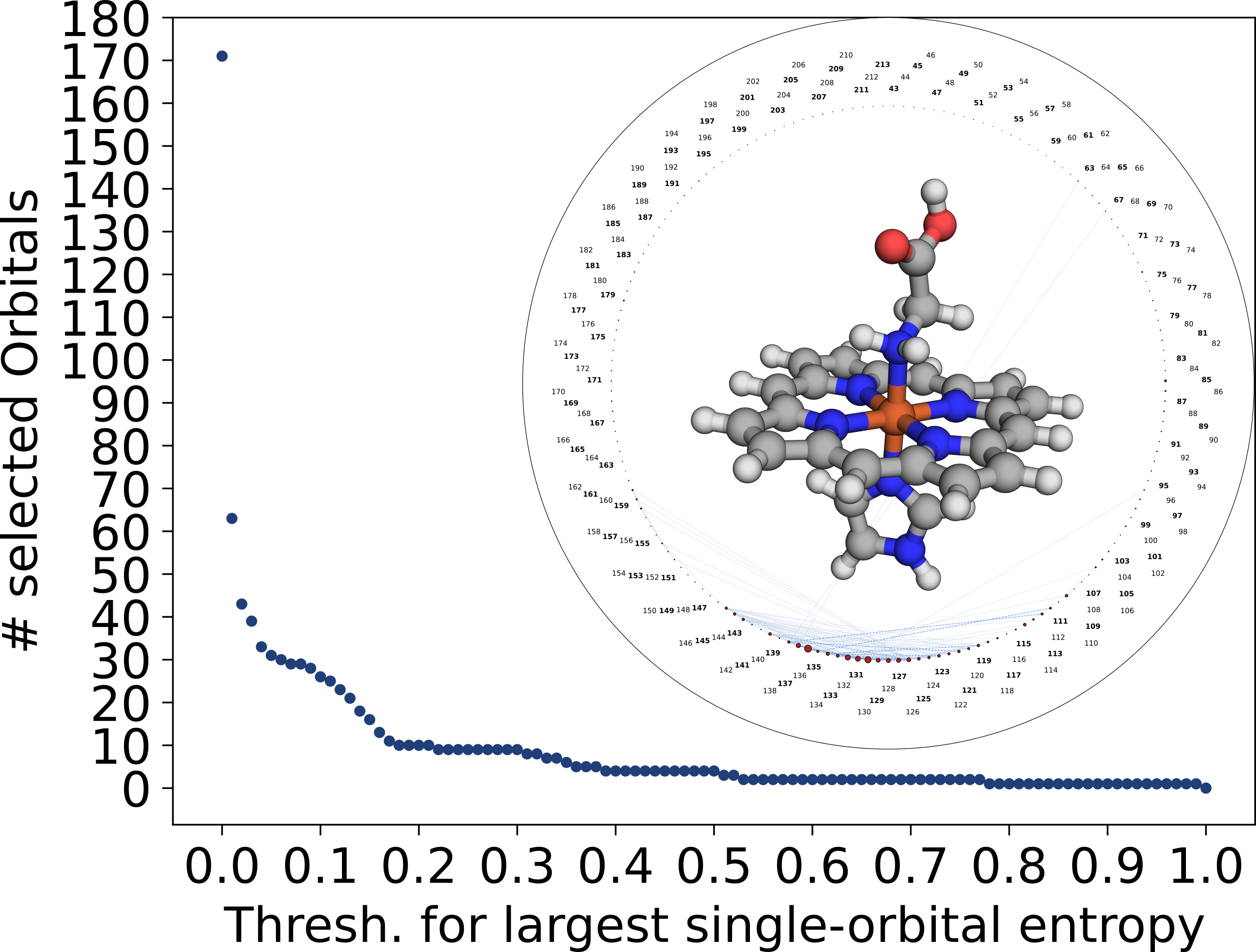}
    \includegraphics[width=0.45\linewidth]{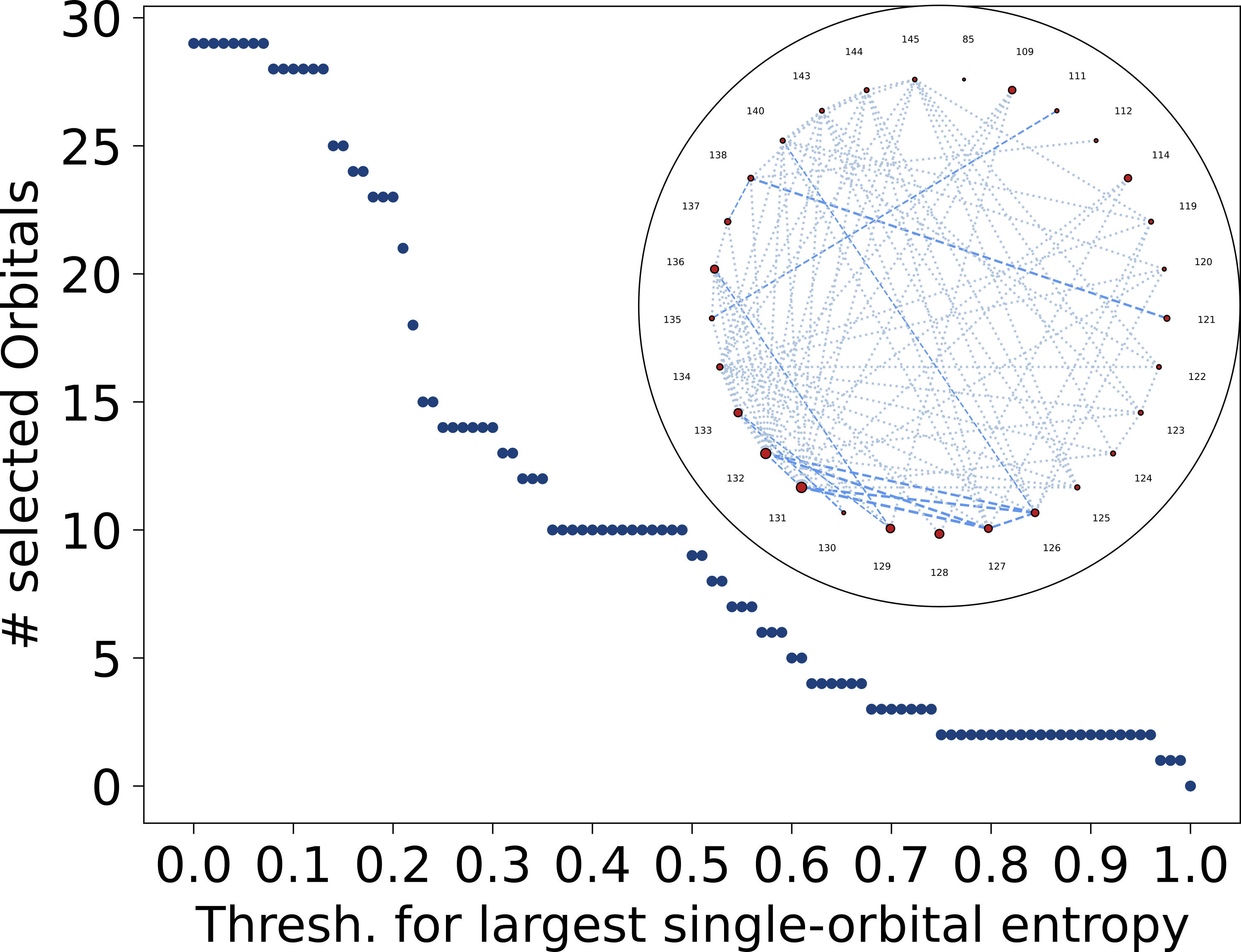}
    \caption{Left: Full-valence threshold diagram of the Fe-Porph complex based on large CAS (30 orbitals per sub-CAS; the united orbital entanglement diagram and the optimized structure of the complex are shown as an inlay).
    The first (small) plateau is found at 10 orbitals, followed by a larger plateau at 9 orbitals. 
    The corresponding active spaces both include 10 electrons.
    Right: 
Threshold and orbital entanglement (inlay) diagrams of a small active space of 30 orbitals and 34 electrons for the Fe-Porph complex.
    Now, a large plateau is clearly visible at 10 orbitals. 
    }
    \label{fig:fe_porph}
\end{figure}

We now turn to other potential class-2 molecules, where we demonstrate that this drastic reduction in active space is not always found.

\subsection{Orbital Effects and Class-2 Cases\label{sec:orbitals}}

In contrast to class-1 molecules, the valence space of most class-2 molecules is larger and more orbitals are stronger correlated. However, the
orbital choice can have a significant effect on orbital entanglement, and hence, on the selection of active orbitals.
We therefore investigated the effects of different orbital choices (see appendix for details). 
For [2Fe-2S] with a valence space of about 80 orbitals, we found plateaus in the threshold diagrams that would lead to comparatively small active spaces of
less than 20 orbitals for all orbital sets that we considered, hence indicating that this small iron-sulfur cluster
can be considered a class-1 case.
However, this picture will change as the FeS clusters become bigger owing to the increasing number of open-shell metal centers.
The larger iron-sulfur cluster [4Fe-4S] features an
initial active space of 110 electrons and 116 orbitals. 
In view of the complete active space (CAS) selections found for the smaller iron-sulfur cluster [2Fe-2S], orbital localization is key to reduce orbital entanglement and eventually allows one to reduce the active space size. This was also found for the [4Fe-4S] cluster, where the final active space in a split-localized basis comprised 43 orbitals, making it a boundary case between classes 1 and 2.

For our FeMoco structural model, the initial valence space contains 249 electrons in 254 orbitals (CAS(249, 254)).
It was already recognized by Chan and co-workers \cite{Li2019} that a full valence space for FeMoco is likely to be necessary to represent the S=3/2 resting state of FeMoco.
However, these authors proposed a significantly smaller active space of 113 electrons in 76 spatial orbitals (including Fe $3d$, S $3p$, Mo $4d$, C $2s2p$, and ligand orbitals) than what we consider in this paper.
The reason for this large discrepancy is likely the fact that we include the double-$d$ shell for all metal centers and also the valence-shell contributions of all atoms in our FeMoco model. Also note that we have different ligands coordinating, which is, however, not decisive for the fact that the full-valence active space is huge in our case.
The orbital entanglement diagrams confirm that such an extension of the valence space is justified and most likely needed for a reliable multi-configurational description.

While all iron-sulfur clusters considered are clear class-2 cases in a standard (delocalized) orbital basis, properly localized orbitals can shrink the number of strongly entangled, hence active orbitals to a size typical for class-1 molecules. However, for large clusters such as FeMoco, this is no longer the case and its electronic structure remains that of class-2 for any orbital set. This is also well reflected in the structure of the threshold diagrams (see Fig.\ \ref{fig:fes-comp} for a comparison).
Hence, a sufficiently large number of (bridged) open-shell transition metal atoms produces a class-2 situation. Other examples for which this can be expected, are the 
active site of FeFe hydrogenase \cite{Finkelmann2014},
the P-cluster of nitrogenase \cite{Li2019b},
and possibly the manganese cluster in the photosystem II \cite{Naja2014}
to mention only a few.

\begin{figure}[htp!]
    \centering
    \includegraphics[width=1.0\linewidth]{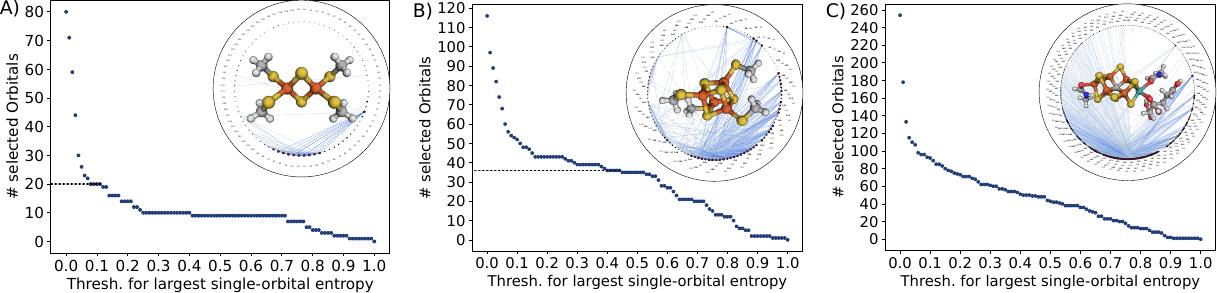}
    \caption{
Orbital entanglement diagrams (inlays) and threshold diagrams for [2Fe-2S] (A), [4Fe-4S] (B) and FeMoco (C) in a split-localized natural orbital basis (set 4 described in the appendix, 
which resembles the orbitals in Ref.\ \citenum{lee2023}). Color code: yellow denotes sulfur atoms, orange iron atoms , red oxygen atoms, gray carbon atoms, dark blue nitrogen atoms, white hydrogen atoms, and turquoise molybdenum.
A) Full-valence DMRG ($m=250$) diagrams for an initial active space of 80 orbitals and 82 electrons.
The horizontal dotted line in the threshold diagram shows a plateau of 20 orbitals and 30 electrons, which corresponds to the manually chosen active space in Ref.\ \citenum{lee2023}.
B) Large-CAS (30 orbitals per sub-CAS) for an initial active space consisted of 116 orbitals and 110 electrons.
The horizontal dotted line in the threshold diagrams shows the manually selected active space CAS(52, 36) from Ref.\ \citenum{lee2023}.
The two plateaus above this active space result in a CAS(62, 43) and CAS(56, 39), while the one directly below yields a CAS(50, 35).
C) Large-CAS (30 orbitals per sub-CAS) for an initial active space of 249 electrons in 254 orbitals.
    }
    \label{fig:fes-comp}
\end{figure}

To demonstrate that the analysis generalizes to other types of molecules, we inspect buckminster fullerene C$_{60}$ with 240 electrons in 240 orbitals.
Due to the large conjugated $\pi$-system in C$_{60}$, a final active space consisting of 60 orbitals can be selected, which is in line with chemical intuition. However, an active space with 60 electrons and orbitals is already challenging to converge. It is obvious that class-2 molecules can be easily generated in this way by simply enlarging the number of $sp^2$-carbon atoms in the conjugated structure (as in fullerenes larger than C$_{60}$, in annelated (hetero)aromatic ring systems, in graphene flakes, or in unsaturated organic macromolecules).

\begin{figure}[htp!]
    \centering
    \includegraphics[width=0.45\linewidth]{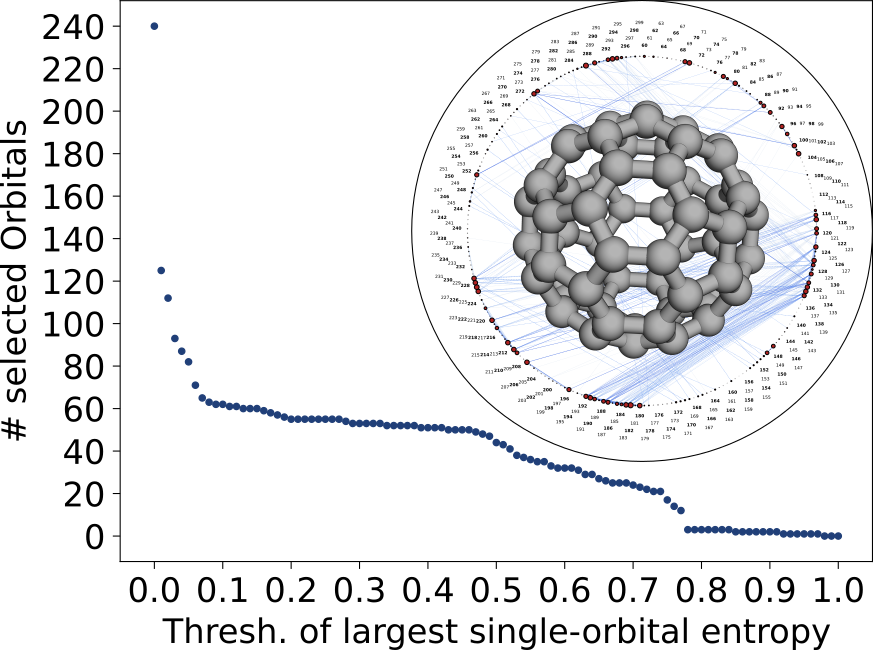}
    \caption{Threshold and orbital entanglement (inlay) diagrams for the fullerene C$_{60}$ in $I_h$ symmetry (structure also shown in ball-and-stick representation) based on split-localized natural orbitals. 
    }
    \label{fig:c60}
\end{figure}

Initial state preparation will be challenging for class-2 molecules.
For FeS clusters, it is clearly advantageous to switch from a determinantal to a configuration-state-function basis (see the discussion in the appendix). Still, depending on the size and bonding of a cluster, millions of basis functions might be required to achieve a significant overlap with the target state. Although it would be feasible to initialize millions of determinants (compared to the effort required for QPE), it can be hard to sample millions of basis states from an MPS. Hence, it would be most advantageous to initialize the approximate MPS directly that has been obtained while selecting the active orbital space. Algorithms to prepare an MPS on a quantum computer have already been proposed \cite{2023MalzMPS,Smith2024,Iaconis2024,Smith2023} (see also the appendix).
One might argue that the approximate MPS could be qualitatively wrong (for instance, a bond dimension of $m$=250 applied here for FeS clusters might be considered to be an order of magnitude to small in order to be reliable), but this is in fact not an obstacle.
While the quality of the approximate MPS is indeed unknown, quantum computation can help investigate its accuracy. If the approximate MPS is a good trial state, we will obtain the ground state with QPE. Otherwise, QPE can help guide us towards a better wave function.

\section{Conclusions}

In this work, we 
defined three classes of electron correlation problems based on the distribution of single-orbital entanglement entropies (arranged in threshold diagrams). 

Class-0 molecules are typical single-reference cases where the Hartree--Fock determinant dominates by far the determinantal expansion of the target state. The electronic structures of these molecules are dominated by dynamic electron correlation, which is well captured by coupled cluster models in traditional computations. However, we emphasize that no accuracy estimates are available for individual molecules when applying these models so that error-controlled QPE algorithms possess a clear advantage over traditional quantum chemical calculations. 

By contrast,
class-1 molecules represent the standard multi-configurational problem in ground-state (reaction) chemistry, which can still be initialized by a single Slater determinant or, in order to push the overlap with the target state to more than 90\%, by a small number of Slater determinants, as demonstrated in this work on a set of prototypical cases. Determinants to be initialized can be easily sampled from an approximate MPS (which will be available if the AutoCAS algorithm is used to determine the active orbital space) or taken from an approximate CI singles-doubles calculation as in multi-reference CI models.

The truly challenging cases for guiding-state preparation are in 
class 2. These are found for electronically excited states and for special types of molecules that show truly strong electron correlation in the ground state. The prime example of the latter are open-shell $3d$ transition-metal clusters. In such cases, a reduction of the full valence orbital space will hardly be possible resulting in a huge number of determinants to be initialized as a superposition. However, even then, suitable guiding states will still accessible from approximate full-valence CAS considerations, as we have demonstrated at the example of FeMoco (although this will require additional work to merge approximate MPS results of large-CAS calculations from the AutoCAS protocol).

The three classes are relevant for (finite-sized) molecules as they occur in the atomistic description of chemical reactions (which requires hardly more than about hundred atoms). As such, reactive systems (i.e., stable intermediates and transition states), which are typically found in classes 0 or 1,  are not affected by the orthogonality catastrophe and guiding state preparation will easily be possible. Already the Hartree--Fock determinant features a sufficiently large overlap with the target state, even for class-1 molecules. This finding extends to embedding descriptions that make accurate quantum simulations accessible for large systems as they occur in materials science or biochemistry (see, for instance, Ref.\ \citenum{Erakovic2024}).

The main conclusion from this work is that computational reaction chemistry on reactive systems containing up to a few hundred atoms will be amenable to quantum computation and will not be plagued by the orthogonality catastrophe in the guiding-state initialization. We emphasize that these conclusions will also hold in a broader context where finite subsystems can be used to model some atomistic problem (such as small elementary cells in materials science or solid state physics). 

The orbital spaces of class-1 molecules will usually be small so that traditional quantum chemical multi-configurational approaches can be applied.
Hence, one might argue that these molecules will not be those for which a quantum advantage can be demonstrated.
While class-2 molecules have been in the focus as key targets for quantum computation, class-1 molecules are the ones typically found in routine reaction chemistry and point to the need for utility-scale quantum computation in the long run.
However, the identification of class-2 molecules (based on a screening of approximate MPSs) will show the way to most critical electronic structures for which a quantum advantage may be demonstrated first.
Threshold diagrams provide a convenient way to distinguish between the two classes, following the qualitative discussion of their different patterns presented in this work. Although there may be boundary cases, it can be expected that they are comparatively rare for local electronic structures of chemical reactants.

In the light of future routine fault-tolerant quantum computations on sufficiently large orbital bases, we note that those orbitals that are neglected in an active space will make a nonnegligible contribution to the wave function and electronic energy.
These dynamic correlations are important for quantitative accuracy and cannot be ignored.
It is necessary to account for them; for instance, by multireference perturbation theory in a QPE framework\cite{gunther2024}. 
Although this is a viable strategy for future intermediate-size fault-tolerant quantum computation, a true breakthrough will be achieved when larger machines will become available that would allow one to consider active spaces of on the order of 1000 or more spatial orbitals.
Then, almost all reaction chemistry could be treated with error control through QPE, no additional dynamic correlations would need to be taken into account and state initialization could still be done based on the small-CAS information (obtained, for instance, from an approximate DMRG calculation as done in this work).

\section*{Acknowledgments}
M.R. gratefully acknowledges generous financial support by the Swiss National Science Foundation (grant no. 200021\_219616), the Novo Nordisk Foundation (grant no. NNF20OC0059939 ‘Quantum for Life’), and through ETH Research Grant ETH-43 20-2.
Part of this work was presented by M.R. at the ACS Fall Meeting in Denver in August 2024.

\begin{appendix}
\section{Computational Approach}

To analyze the weight of the Hartree--Fock determinant in the determinantal expansion of target states, we chose a set of molecules that belong to classes 0, 1, and 2.
Most of these molecules have been known to exhibit a multi-configurational character of some degree:
C$_4$H$_4$\cite{lyakh2011},
O$_3$\cite{pulay1988},
$^1$C$_6$H$_4$\cite{evangelista2012}, 
$^3$C$_6$H$_4$\cite{evangelista2012}, 
N$_2$O$_4$\cite{pulay1988},
$^1[$Fe(NO)(CO)$_3]^-$\cite{sayfutyarova2017} 
and $^3[$Fe(NO)(CO)$_3]^-$\cite{sayfutyarova2017}
(we took the Cartesian coordinates of the compounds from the respective references).
We expect all of these molecules to be in class 1 and added the water molecule as a typical class-0 molecule for comparison (the structure of the water molecule was taken from Ref.\ \citenum{Hoy1979}).
Note that all these structures are equilibrium structures, except for cyclobutadiene C$_4$H$_4$, which is a transition state structure.

As potential candidates for molecules in class 2, we selected an iron-porphyrin model complex, abbreviated as Fe-Porph in the following, of the active site of a heme enzyme (myoglobin) with a coordinated glycine and an imidazole residue (Cartesian coordinates were taken from Ref.~\citenum{Tinzl2024}). 
Moreover, we investigated the fullerene C$_{60}$\cite{Sure2017} in $I_h$ symmetry and three iron-sulfur clusters, namely 
[Fe$_2$S$_2$(SCH$_3$)$_4$]$^{2-}$\cite{sharma2014} denoted as [2Fe-2S],
[Fe$_4$S$_4$(SCH$_3$)$_4$]$^{2-}$\cite{sharma2014} denoted as [4Fe-4S],
and the FeMo-cofactor (FeMoco) of nitrogenase with anchoring amino acid residues (as shown in Fig.\ \ref{fig:femoco} below).

We optimized 
the molecular structure of [2Fe-2S] taken from Ref.\ \citenum{sharma2014} with broken-symmetry density functional theory (BS-DFT)\cite{Noodleman1981} implemented in the ORCA program package\cite{neese2022} with the PBE\cite{Perdew1996} functional, D3BJ dispersion correction\cite{Grimme2011}, the def2-SVP\cite{weigend2005} basis set, and an initial spin multiplicity of 11.
After convergence of this high-spin state, the local spin is flipped on one metal center.
Afterwards, the Kohn-Sham orbitals for a singlet state were optimized in an unrestricted calculation (with the same number of $\alpha$- and $\beta$- spin orbitals to model an open-shell singlet state with antiferromagnetic coupling).
This procedure is a standard procedure for a DFT description of a multi-center cluster, since both Fe$^{3+}$ atoms are in their high-spin configuration with a local spin quantum number of $5/2$.

We took the molecular structure of [4Fe-4S] from Ref.\ \citenum{sharma2014} and optimized the low-spin broken-symmetry determinant similar to the procedure described for [2Fe-2S]. 
However, since this cluster now contains four iron centers, the initial spin multiplicity is 21, while the final state is again a singlet. 

The Cartesian coordinates of the FeMoco were extracted from the X-ray structure 3U7Q\cite{spatzal2011} downloaded from the protein database. 
Since the FeMoco can be extracted from the protein in N-methylformamide (NMF) in experiment\cite{Shah1977}, we only retained the FeMoco with a bound homocitrate ligand and two NMF solvent molecules that saturate the coordination sphere. 
This structure was then optimized with BS-DFT (PBE functional\cite{Perdew1996}, def2-SVP basis set\cite{weigend2005}, effective core potential for molybdenum \cite{andrae1990}, and
D3BJ dispersion corrections\cite{Grimme2011}) with Turbomole\cite{Ahlrichs1989, Franzke2023} in an analogous procedure as described above for [2Fe-2S].
The resulting electronic structure corresponds to the experimentally known $M_S=S=3/2$ quartet resting state.\cite{Munck1975, Rawlings1978} 
We verified that the local-spin coupling scheme corresponds to the energetically most stable ``BS-7'' spin alignment which is the same as in Refs.~\citenum{Lukoyanov2007, Kastner2007, Harris2011, Dance2011, Bjornsson2017}.

Since the orbital basis can heavily affect the multi-configurational character of the wave function, different orbital choices can severely change the determinantal composition of the ground state (see also \cref{sec:orbitals} and \cref{sec:proof}), with consequences for guiding-state preparation.
For the iron-sulfur clusters, the BS-DFT optimization of the desired total spin state for the resting state can be difficult because of the large number of orbitals with similar energy and the low-lying excited states of different spin configuration.
We accomplished convergence of all FeS clusters by a sequence of level shifts of decreasing size (the final optimization step was then accomplished without a level shift).

Moreover, we decided on a second strategy to obtain another set of molecular orbitals that can then be compared to the broken-symmetry solution.
To obtain a second set of orbitals, we followed the work of Chan and co-workers\cite{Li2019,lee2023} and optimized the high-spin state, which can be represented by a single determinant with 35 unpaired electrons (spin multiplicity is 36).
In addition, we explore split-localized natural orbitals from both high- and low-spin unrestricted DFT solutions (see section \ref{sec:orbitals}).
Note also that we have not considered scalar-relativistic effects for all atoms but molybdenum as these are small for iron and the main-group elements.

For the sake of completeness and reproducibility, we collected the coordinates of all compounds and provided them at the end of this supporting information.

As an approximation to the target states to which we can compare initial state information, we optimized matrix product states (MPSs) with the density matrix renormalization group (DMRG) algorithm\cite{White92, White93} implemented in the program QCMaquis\cite{keller2015a,qcmaquis_zenodo}.
For all DMRG single-point calculations 
on all molecules but Fe-Porph, C$_{60}$, and the Fe-S clusters, we optimized Hartree--Fock orbitals in a cc-pVDZ\cite{dunning1989,balabanov2005,balabanov2006} basis set with pySCF\cite{pyscf}. pySCF is interfaced to our DMRG program QCMaquis\cite{keller2015a,qcmaquis_zenodo} and provides the one- and two-electron integrals in the molecular orbital basis for the second-quantized electronic Hamiltonian on the fly. 

For Fe-Porph, the Fe-S clusters and the buckminsterfullerene C$_{60}$, unrestricted Kohn-Sham DFT single-point calculations delivered PBE\cite{Perdew1996}/def2-SVP\cite{weigend2005} orbitals (taking the effective core potential for Mo in FeMoco into account) in pySCF\cite{pyscf} following the same orbital preparation protocol as used for the structure optimization (see above), in order to produce the integrals in the molecular orbital basis for QCMaquis.

Single-orbital entropies were taken as a measure for the importance of an orbital in the active space \cite{Boguslawski2012,Stein2016}.
For the initial approximate DMRG calculations (choosing a low bond dimension of $m=250$ and five forward-and-backward sweeps, if not noted otherwise), all valence orbitals were selected for an \textit{initial} active space.
The resulting MPSs served to evaluate approximate single-orbital entropies\cite{Legeza2003}.
In addition, we provide results on mutual information measures \cite{Legeza2006,Rissler2006} in orbital entanglement diagrams that show orbital entropies for orbitals arranged on a circle with mutual information encoded in the thickness of the connecting lines\cite{Freitag2015}. 
The \textit{final} active spaces, for which DMRG calculations were then converged to obtain a reliable representation of the target states, consisted solely of the strongly entangled orbitals as identified by this orbital selection scheme \textsc{AutoCAS}\cite{Stein2016,Stein2019,autocasZenodo}.
\textsc{AutoCAS} allows one to run multi-configurational calculations in a black-box fashion (see, for instance, its cloud-based implementation in the framework of Azure Quantum Elements\cite{Unsleber2023AutoRXN}).

Because the valence orbital spaces of [4Fe-4S], Fe-Porph, C$_{60}$, and FeMoco consisted of more than 100 electrons and orbitals (including the $4d$ double-shell orbitals for all iron centers to account for double-$d$-shell correlation effects\cite{Andersson_1992_doubledshell, Pierloot_1995_doubledshell}), we applied the large active space protocol of the \textsc{AutoCAS} program\cite{Stein2019} for active space selection. This protocol partitions the large valence active space into smaller spaces from which the most strongly entangled orbitals can then be chosen. For FeMoco, we note that the number of electrons to be considered in this initial screening was 249 and the corresponding number of orbitals 254.

Since the large active space protocol in Ref.\ \citenum{Stein2019} was described for closed-shell molecules, we generalized it to open-shell systems. 
To conserve the spin-multiplicity in the smaller spaces, the energetically highest, formerly singly occupied orbitals, were included in each smaller space. 
As in the original implementation, each smaller space is evaluated by DMRG and the corresponding single-orbital entropies and mutual information measures are recombined based on their maximum value  
in order to approximate the initial large (i.e., full-valence) orbital space.
Subsequently, the active space is selected by the plateau-based algorithm for threshold diagrams \citenum{Stein2016}.

To identify the important determinants in an MPS, we evaluated the weight $|c_i|^2$ of a determinant by explicit contraction of the MPS\cite{Gerrit2007MPStoCI} in the stochastic sampling procedure SR-CAS \cite{Boguslawski2011} with determinant $i$ to obtain its configuration interaction (CI) coefficient $c_i$.
For the converged MPS, we calculated the accumulated weight $A_w$ of the important determinants, 
\begin{align}
    A_w = \sum_{i=1}^{N_{d}} |c_i|^2
\end{align}
of the $N_d$ most important determinants as revealed by the SR-CAS algorithm (with a threshold of the absolute value of the configuration interaction coefficients of 0.00001).

We also evaluated the MPS overlap of the approximate MPS with the converged MPS for the \textit{final} active space of each molecular structure.
This required another approximate DMRG calculation (with low bond dimension and low number of sweeps) for the selected orbitals before convergence is then reached for large bond dimension.
Note that the additional approximate DMRG calculation is always fast after the orbital selection step, but the converged DMRG calculation may not be accessible, which is the reason for turning to a quantum computer.
The MPS overlap measures calculated here shall demonstrate the high fidelity of an approximate MPS that is comparatively easy to obtain and can therefore serve in the guiding-state preparation step.

\section{Threshold diagrams}

In threshold diagrams, the $x$-axis lists single-orbital entropies taken as thresholds to count how many orbitals are assigned this or a larger value. This number of orbitals is then given by the $y$-axis.
Hence, every point in such a diagram denotes how many orbitals have a factional single-orbital entropy larger than the corresponding $x$-value.
Single-orbital entropies\cite{Legeza2003} are given by
\begin{equation}
  s_i(1) = - \sum_{\beta = 1}^4 w_{\beta, i} \ln w_{\beta, i}
  \label{eq:singleOrbitalEntropy}
\end{equation}
where $w_{\beta, i}$ are the eigenvalues of the one-orbital reduced density matrix for orbital $i$ and $\beta$ is one of the four possible configurations for a spatial orbital (i.e., unoccupied, $\alpha$- or $\beta$-occupied, or doubly occupied).
The mutual information\cite{Rissler2006, Legeza2006} of two orbitals $i$ and $j$
\begin{equation}
  I_{ij} = \frac{1}{2} \left[ s_i(1) + s_j(1) - s_{ij}(2) \right] (1 - \delta_{ij})
  \label{eq:mutualInformation}
\end{equation}
requires the corresponding two-orbital entropy
\begin{equation}
  s_{ij}(2) = - \sum_{\beta = 1}^{16} w_{\beta, ij} \ln w_{\beta, ij},
  \label{eq:twoOrbitalEntropy}
\end{equation}
where $w_{\beta, ij}$ are the eigenvalues of the two-orbital reduced density matrix and $\beta$ now refers to all 16 occupations possible for two spatial orbitals.

After reordering and normalizing (by the largest element) the
single-orbital entropies according to their value, the AutoCAS 
algorithm 
is searching for plateaus,
which are plotted in a threshold diagram. 
All orbitals that belong to a plateau
have at least a single-orbital entropy, which is larger than
a threshold percentage with respect to the largest element. 
As weakly correlated orbitals only have small single-orbital
entropies, these orbitals drop out at low-threshold percentages. Hence, these plateaus
are employed in the algorithm to identify strongly correlated orbitals. Another benefit
of the plateaus is that unexpected, but strongly correlated orbitals are found and
included in the final active space calculation. 

Fig.\ \ref{fig:ideal_thresholds} A) shows an idealized threshold diagram of a class-0 molecule. Since class-0 molecules are not strongly correlated, no entropy is larger than a threshold (10~{\%} of the theoretical maximum) and hence they are single-configurational. This is expressed by the steep drop in the orbital number at a low threshold.
Fig.\ \ref{fig:ideal_thresholds} B) shows an idealized threshold diagram of a class-1 molecule. Due to the moderate correlation of some orbitals in class-1 molecules, a typical plateau structure emerges. The initial drop in the number of orbitals results from the dynamically correlated orbitals in the initial active space.
After the screening of the dynamically correlated orbitals, some plateaus emerge, which indicate orbitals with a similar degree in correlation. 
The system dependent-scaling in the diagram shows the same plateau behavior without the tail at 0 selected orbitals, since usually the maximum single orbital entropy is well below the theoretical maximum.
Fig.\ \ref{fig:ideal_thresholds} C) shows an idealized threshold diagrams of class-2 molecules. As these molecules exhibit strong correlation between many electrons, no clear plateau structure emerges. Since many electrons are correlated, with a slightly different single orbital entropy, a small number of orbitals drops out even after small changes in the threshold percentage. Additionally, due to strong correlation some single-orbital entropies are close to the theoretical maximum so that the system dependent scaling cannot increase the visibility of plateaus. 

\begin{figure}
    \centering
    \includegraphics[width=\textwidth]{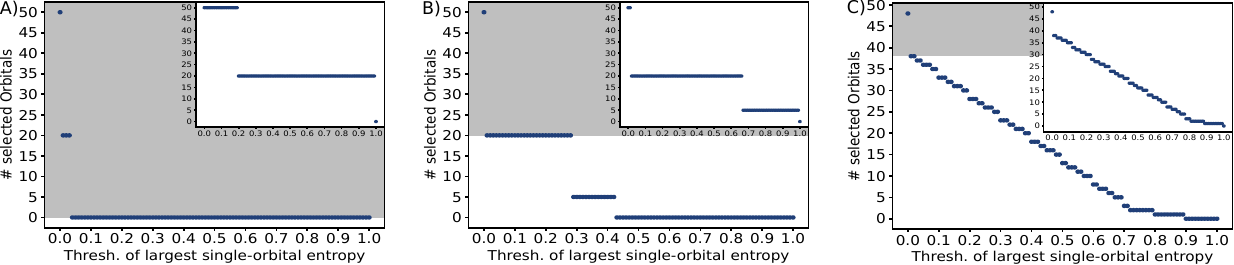}
    \caption{Idealized threshold diagrams for different correlation classes are shown.
    The larger diagram shows the thresholds normalized by the theoretical maximum.
    The grey area indicates dynamically correlated orbitals.
    In the inlay diagram the thresholds are normalized by the maximum entropy of the system, as these are interpreted by AutoCAS.
    A) class-0 molecules, B) class-1 molecules, and C) class-2 molecules.
    }
    \label{fig:ideal_thresholds}
\end{figure}

\section{Additional Discussion}

\subsection{Class-1 Cases}

All class-1 molecules show a typical plateau structure in the threshold diagrams defined in Ref.\ \citenum{Stein2016} (see Fig.\ \ref{fig:class_1_threshold}), 
which is the basis for the reliable selection of orbitals by the AutoCAS algorithm. 
Note that the shown threshold diagrams are scaled by the maximum entropy of the systems, if not mentioned otherwise.

\begin{figure}[htp!]
    \centering
    \includegraphics[scale=0.8]{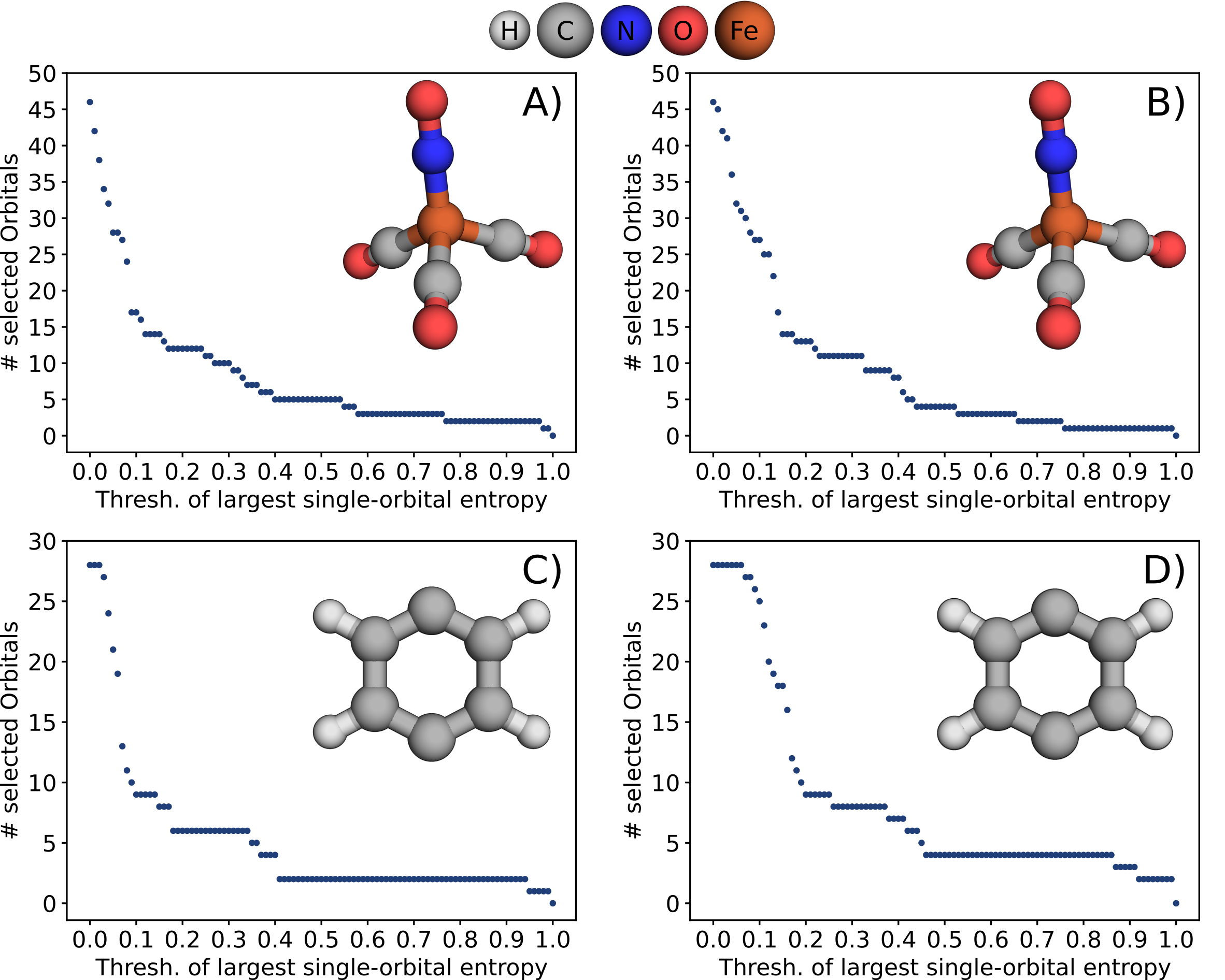}
    \caption{Threshold diagrams (as defined in Ref.\ \citenum{Stein2016}) 
    for the four class-1 molecules A) $^1[$Fe(NO)(CO)$_3]^-$, B) $^3[$Fe(NO)(CO)$_3]^-$, C) $^1$C$_6$H$_4$ and D) $^3$C$_6$H$_4$. On the $x$-axis, the entropy of an orbital divided by the maximum entropy found for some orbital in system is given. The $y$-axis lists the number of orbitals with an entropy larger than the threshold defined by the value on the $x$-axis.}
    \label{fig:class_1_threshold}
\end{figure}

\subsection{Orbital Effects\label{sec:orbitals}}

We investigated the effects of different orbital choices (see Table \ref{tab:orbital_choices} for an overview) in the large CAS protocol. 
In view of the approximate nature of this protocol, we investigated the convergence with the number of orbitals per sub-CAS for each orbital set.

\begin{table}[htp!]
    \centering
    \begin{tabular}{c l}
    \hline\hline
    Set 1: & BS-DFT orbitals \\
    Set 2: & BS-DFT, split-localized natural orbitals \\
    Set 3: & High-spin UKS orbitals \\
    Set 4: & High-spin UKS, split-localized natural orbitals  \\
    Set 5: & High-spin UKS, full-split-localized natural orbitals \\
    \hline\hline
    \end{tabular}
    \caption{Overview of different orbital sets employed in DMRG calculations on [2Fe-2S] and [4Fe-4S].}
    \label{tab:orbital_choices}
\end{table}

Since the authors of Ref.\ \citenum{lee2023} localized the orbitals, we also investigated the effect of orbital localization on the active space selection.
To keep the algorithm as generally applicable as possible, we separately localized different orbital spaces, which we select based on natural occupation numbers (NON) 
(see Fig.\ \ref{fig:orbital_choices}). The first localization space consists of all orbitals with $\text{NON} > 1.98$ and the second space of orbitals with $1.98 \geq \text{NON} \geq 0.02$ (orbital sets 2 and 4 in Table \label{tab:orbital_choices}). 
We applied the thresholds as in Ref.\ \citenum{Keller2015b}, where the second localization space corresponds to the unrestricted natural orbital selected CAS. 
Since the initial active space in AutoCAS is generally larger, we additionally localize the remaining orbitals with $\text{NON} < 0.02$ (orbital set 5 in Table \label{tab:orbital_choices}). 

\begin{figure}[htp!]
    \centering
    \includegraphics[width=.6\linewidth]{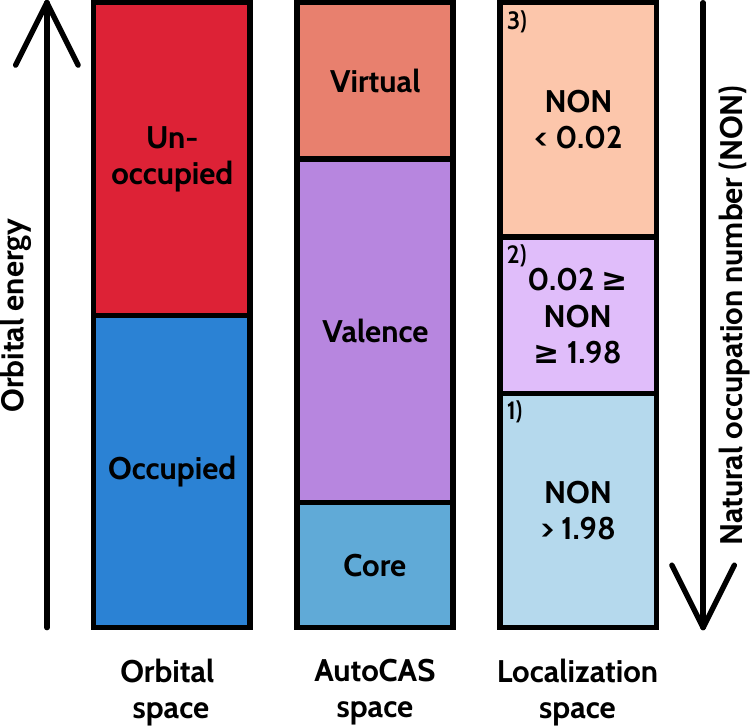}
    \caption{Three different sub-divisions of orbitals. 
    The orbital space of the Hartree--Fock determinant may be divided into \textit{occupied} and \textit{unoccupied} orbitals.
    In the AutoCAS algorithm, the initial orbital space is divided into a \textit{core}, \textit{valence}, and \textit{virtual} space. Since AutoCAS aims to function in a black-box manner, the valence space is determined by selecting the number of valence orbitals around the Fermi-Vacuum. 
    For the split localization, the orbital space is divided into three-spaces based on natural occupation numbers according to thresholds of 1.98 and 0.02 (see Ref.\ \citenum{Keller2015b}).
    }
    \label{fig:orbital_choices}
\end{figure}

We analyzed the active space selection of [2Fe-2S] and [4Fe-4S] clusters for different orbital choices. The standard approach (of questionable reliability) in computational chemistry to describe these clusters is BS-DFT which emulates the anti-ferromagnetic coupling by an unrestricted determinant and evaluates the corresponding energy with a standard exchange-correlation density functional approximation that is not designed for such spin states \cite{Jacob2012}.

The [2Fe-2S] cluster is the perfect test case to investigate the orbital choice, as both iron centers are antiferromagnetically coupled into a singlet but locally in a high spin configuration. 
The valence space of the [2Fe-2S] cluster, CAS(82, 80), is small enough to be evaluated with DMRG directly. It does not require the large-CAS protocol.
We first optimized the high-spin determinant (orbital set 3) and subsequently flipped the local spin on one iron atom, to converge the broken-symmetry, low-spin determinant (orbital set 1). 
The natural orbitals of set 1 and set 3 were split-localized in localization space 1) and 2) (see Fig.\ \ref{fig:orbital_choices}) to generate orbital sets 2 and 4, respectively.  
Orbital set 5 was generated by split-localization of the natural orbitals of set 1 in the three localization spaces 1), 2), and 3) shown in Fig.\ \ref{fig:orbital_choices}.

We selected active spaces for all 5 orbital sets with the default AutoCAS algorithm and compared the results with those obtained for the large active space protocol with 10, 20, and 30 orbitals per sub-CAS. In this way, we could assess the convergence and thereby the reliability of the large-CAS protocol, which we afterwards applied to [4Fe-4S] and FeMoco where no all-valence AutoCAS selection is possible in one shot. 

For the full-valence DMRG($m=250$) calculation, 
the AutoCAS suggested active space for every orbital set consisted of 10 to 12 orbitals.
By contrast, the orbital choice had a dramatic effect of the CAS selection based on the large-CAS protocol.
Orbital sets 1, 2, and 3 all suggested relatively large active spaces for 10 and 20 orbitals per sub-CAS. For 30 orbitals per sub-CAS, no plateau could be determined for the delocalized orbital sets (sets 1 and 2). 
For set 4, AutoCAS suggested an active space between 13 and 19 orbitals for any number of orbitals per sub-CAS. 
The reason for this behavior is that 
there are too many different single-orbital entropies so that no large plateau can form, 
because there is no clear distinction between strongly  and weakly correlated orbitals in these systems.
As a result, these active spaces contained $3d$-type orbitals of the Fe centers, but no orbitals living on the sulfur bridges.
Since the maximal single-orbital entropy for [2Fe-2S] is close the theoretical maximum, the length of a plateau shortens so that plateaus cannot be identified when following the default plateau threshold in the AutoCAS algorithm. 
After decreasing the minimum plateau length, AutoCAS suggested reasonable active spaces with 16 orbitals, 14 orbitals, 17 orbitals, and 20 orbitals for set 1, set 2, set 3, and set 4, respectively, now also including bridging sulfur-including orbitals.

\begin{figure}[htp!]
    \centering
    \includegraphics[width=.65\linewidth]{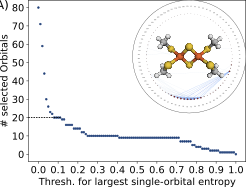}
    \caption{Full-valence DMRG ($m=250$)  
    based entanglement and threshold diagram of the [2Fe-2S] cluster, with split-localized natural orbitals from the high-spin determinant (for orbital set 4, which resembles the orbitals if Ref.\ \citenum{lee2023}).
    The initial active space consisted of 80 orbitals and 82 electrons. 
    The horizontal dotted line in the threshold diagram shows a plateau of 20 orbitals and 30 electrons, which corresponds to the manually chosen active space in Ref.\ \citenum{lee2023}.
    }
    \label{fig:fe2s2}
\end{figure}

The larger iron-sulfur cluster [4Fe-4S] features an
initial active space of 110 electrons and 116 orbitals. A full-valence DMRG calculation was therefore hardly feasible.
Hence, we optimized the orbital sets 1, 3, 4, and 5 for this system and searched for the active with the large-CAS protocol with 30 orbitals per sub-CAS. 

Similar to the smaller [2Fe-2S] complex, the delocalized orbitals (sets 1 and 3) showed much stronger (approximate) entanglement, by contrast to sets 4 and 5. 
For orbital set 4, multiple small plateaus formed, starting with 36 orbitals. For the split localized orbital, plateau formation occurred at 43 orbitals. 
The difference in the number of orbitals is due to the approximate treatment of the active space in the large-CAS protocol.
However, since the large-CAS protocol results usually in too large active spaces, we can use it to reduce the initial active space size and evaluate the smaller active space with a single subsequent DMRG calculation.
Moreover, the smaller plateaus are likely to deliver the correct answer (basically this is the point where the entropies show  a kink, usually at low threshold percentages).

\begin{figure}[htp!]
    \centering
    \includegraphics[width=.65\linewidth]{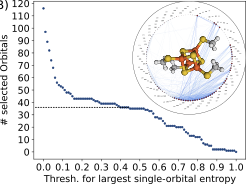}
    \caption{Large-CAS (30 orbitals per sub-CAS) entanglement (inlay) and threshold diagrams of the [4Fe-4S] cluster with split-localized natural orbitals obtained from the high-spin determinant (set 4).
    The initial active space consisted of 116 orbitals and 110 electrons. 
    The horizontal dotted line in the threshold diagrams shows the manually selected active space CAS(52, 36) from Ref.\ \citenum{lee2023}.
    The two plateaus above this active space result in a CAS(62, 43) and CAS(56, 39), while the one directly below yields a CAS(50, 35).
    }
    \label{fig:fe4s4}
\end{figure}

In view of the CAS selections found for the smaller iron-sulfur cluster [2Fe-2S], orbital localization is key to reduce orbital entanglement and eventually allows one to reduce the active space size. 
Hence, these iron-sulfur clusters are clear class-2 cases in a delocalized orbital basis that does not encode the qualitative features of their electronic structures well. By contrast, properly localized orbitals shrink the number of strongly entangled, hence active orbitals to a size typical for class-1 molecules. This is also well reflected in the structure of the threshold diagrams.

For [2Fe-2S], the plateau corresponding to the manually selected active space of Ref.\ \citenum{lee2023} shows a width of only four threshold steps.
The next plateau at 10 orbitals would lead to the inclusion of only the $d$-orbitals centered at the iron atoms, but it would miss orbitals on the bridging sulfur atoms. These edge cases also show that the original plateau threshold value of 10 threshold steps defined in Ref.\ \citenum{Stein2016}, which is the basis for the active orbital selection in AutoCAS, is too large to consistently select active spaces for class-2 molecules; hence, it would need to be modified for them.

\subsection{Class-2 Case: FeMoco}

First of all, we note that the valence spaces to be considered in the orbital selection procedure of class-2 cases are in general very large. 
For our FeMoco structural model we even have 249 electrons in 254 orbitals (CAS(249, 254)).
Although large orbital spaces do not necessarily produce strong electron correlation, this may be expected here considering the number of unsaturated bonds and the open-shell metal atoms.
In the following, we discuss the results of the AutoCAS selection based on approximate DMRG ($m=250$, five sweeps) results, 
which will make a clear point for the assignment of these molecules to class 2.
Fig.\ \ref{fig:femoco} shows the threshold and orbital entanglement diagrams for FeMoco.
It is obvious from these diagrams that both possess strongly correlated electronic structures indicated by high single-orbital entropies and mutual information values.

It was already recognized by Chan and co-workers \cite{Li2019} that a full valence space for FeMoco is likely to be necessary to represent the S=3/2 resting state of FeMoco.
However, these authors proposed a significantly smaller active space of 113 electrons in 76 spatial orbitals (including Fe $3d$, S $3p$, Mo $4d$, C $2s2p$, and ligand orbitals) than what we consider in this paper.
The reason for this large discrepancy is likely the fact that we include the double-$d$ shell for all metal centers and also the valence-shell contributions of all atoms in our FeMoco model. Also note that we have different ligands coordinating, which is, however, not decisive for the fact that the full-valence active space is huge in our case.
The orbital entanglement diagrams confirm that such an extension of the valence space is justified and most likely needed for a reliable multi-configurational description.

As in the case of the [2Fe-2S] and [4Fe-4S] clusters, we found 
a qualitative difference in the orbital entanglement diagrams of the FeMoco resting state represented by broken-symmetry orbitals (set 1) versus the specifically prepared all-high-spin orbitals (set 5). The set-1 threshold diagram delivered no conclusive result on the choice of the active space, instead all valence orbitals have to be considered yielding an active space of 167 electrons in 208 orbitals. By contrast, the set-5 threshold diagram still shows a class-2 case, but the active space can be reduced to 117 electrons in 104 orbitals by virtue of this specific choice of orbitals.
Hence, an efficient reduction in CAS size can be expected by preferring localized orbitals according to set 5 over delocalized orbitals from BS-DFT (set 1). 

As a side remark, we note that a further reduction of the active space might be accomplished by inspecting the second entanglement measure, i.e., the mutual information, which is not required for the standard AutoCAS algorithm that relies on single-orbital entropy information only (cf. the reduced mutual information values in Fig.\ \ref{fig:femoco}).

The large-CAS algorithm of Ref.\ \citenum{Stein2019} can be applied for class-2 molecules with benefit as it allows one to identify a class-2 case (the orbital entanglement diagram will show all orbitals to take part in the entanglement pattern and the threshold diagrams will point to class 2).

\begin{figure}[htp!]
    \centering
    \includegraphics[width=0.65\linewidth]{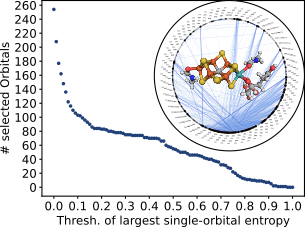}
    \caption{Threshold and orbital entanglement (inlay) diagrams for the FeMoco structure model (also shown in ball and stick representation) based on high-spin split-localized natural orbitals (set 5).
    Each color of a ball represents an atom, namely: yellow corresponds to sulfur, orange corresponds to iron, red corresponds to oxygen, grey corresponds to carbon, dark blue corresponds to nitrogen, white corresponds to hydrogen and turquois corresponds to molybdenum.
    }
    \label{fig:femoco}
\end{figure}

\subsection{Preparation of spin-coupled determinants}
It is evident that the active spaces required for the class-2 molecules discussed above are huge and, accordingly, the number of determinants that contribute to the wave function is enormous. However, many determinants emerge from the coupling of local spins in the case of open-shell transition metal clusters (by simply flipping spin in the optimized orbital basis). This special type of correlation has very recently been highlighted for quantum computing applications \cite{martidafcik2024spin},
and it is fully in line with our general understanding of electron correlation in antiferromagnetically coupled clusters, which has recently also been emphasized in the context of traditional quantum chemistry\cite{Limanni2020,Limanni2021,Dobrautz2021,LiManni_2023,Izsak2023}.
We therefore consider next how much could be gained by exploiting linear combinations of determinants that are properly spin coupled, i.e., configuration state functions (CSFs). 

Before we delve into this option, we already note that exploiting this idea in practice for state preparation would require well localized orbital bases that would allow one to identify
alpha-spin orbitals localized on individual metal centers in the molecular orbital basis. In other words, the regular case for large molecules will be a mixture of orbitals across more than one atom dominated by more than $d$-type functions.
In such cases, it will be difficult to properly couple the unpaired orbitals to a CSF with suitable symmetry in the space of the unoccupied, metal-centered molecular orbitals.

The generation of a spin-coupled CSF requires standard angular momentum coupling as neatly framed in terms of Shavitt graphs of the graphical unitary group approach \cite{Shavitt1981}.
In general, a CSF 
\begin{align}
\label{eq:CSF}
    \ket{\mathbf{t}} = \sum_i d_i \ket{\mathbf{p}}_i
\end{align}
can be represented as a linear combination of Slater determinants $\ket{\mathbf{p}}_i$ with the same orbital configuration and spin-projection quantum number $M$\cite{helgaker_molecular_2000}.
The spin-coupling of a CSF can be represented by a vector $\mathbf{t}$, where each element 
\begin{align}
    t_i = \begin{cases}
     T_i - T_{i-1}, & \text{if $i > 1$},\\
     T_i, & \text{otherwise},
    \end{cases}
\end{align}
where $T_i$ is the total spin obtained from the coupling of the first $i$ electrons. 
Similarly, a determinant can be represented by a vector $\mathbf{p}$, by replacing the total spin with the spin-projection for the elements $T_i$.
Since only open-shell orbitals contribute to the difference of these Slater determinants (a fully closed-shell or high-spin determinant would already represent a CSF),
Slater determinants can be described as a string of creation operators acting on the closed-shell $\ket{cs}$ part of the determinant\cite{helgaker_molecular_2000},
\begin{align}
    \ket{\mathbf{p}} = \prod_i^N \hat{a}_{i, p_i}^\dagger \ket{cs}.
\end{align}
Hence, we can generate a normalized CSF, for an $N$-electron state with a total spin $S$, by the recursive application of an operator $\hat{O}^{S,M}_N$ onto the closed-shell part
\begin{align}
\label{eq:CSF_recursion}
    \ket{\mathbf{t}} & = \hat{O}_N^{S, M}(\mathbf{t}) \ket{cs} \\
    & = \sum_{\sigma = \{-\frac{1}{2}, \frac{1}{2}\}} C_{t_N, \sigma}^{S, M} \hat{O}_{N-1}^{S-t_N, M-\sigma}(\mathbf{t})\hat{a}_{N,\sigma}^\dagger \ket{cs} \\
    & = \sum_{\sigma = \{-\frac{1}{2}, \frac{1}{2}\}} C_{t_N, \sigma}^{S, M} 
        \sum_{\sigma' = \{-\frac{1}{2}, \frac{1}{2}\}} C_{t_{N-1}, \sigma'}^{S-t_N, M-\sigma} 
        \hat{O}_{N-2}^{S-t_N-t_{N-1}, M-\sigma-\sigma'}(\mathbf{t})
        \hat{a}_{N-1,\sigma'}^\dagger \hat{a}_{N,\sigma}^\dagger \ket{cs}
\end{align}
resulting in a linear combination of Slater determinants which are coupled by the Clebsch-Gordon coefficients $C_{t_N, \sigma}^{S, M}$ to normalize the tensor operator.
Both $\alpha$ and $\beta$ orbitals contribute to the coupling so that the recursion on the left hand side of the above equation increases the number of coupled determinants (a high-spin determinant is a special case, as for only $\alpha$ or $\beta$ orbitals contribute).

Let us now consider whether the dominant part of the electron correlation in iron-sulfur cluster originates  from the antiferromagnetic couplings between the iron centers.
One determinant is not sufficient to describe these couplings as it is not an eigenfunction of the $S^2$ operator resulting in unphysical spin-values and CSFs provide a suitable basis as they are eigenfunctions of $S^2$ \cite{Limanni2020,Limanni2021,Dobrautz2021,LiManni_2023, Izsak2023,martidafcik2024spincoupled}.

Application of CSFs in quantum circuits\cite{Sugisaki_2016,Sugisaki2019,Sugisaki_2019a} allows one to extract most information out of a single (broken-symmetry) determinant exploiting spin flips.
It has been shown, that a CSF-based initialization can improve the performance of various quantum algorithms by better capturing strong correlation\cite{martidafcik2024spin}.
Hence, the orbital occupation remains the same, but the spin-orbital occupation changes.
For systems which rely on strong antiferromagnetic couplings such as FeMoco, we artificially divide the static correlation into a part based on these couplings and the remaining correlation based on other effects.

We now analyze whether the application of CSFs can reduce the static correlation on a dinuclear iron-sulfur cluster.
Since this system is less complex than FeMoco, we can efficiently analyze the contributions of the anti-ferromagnetic couplings and other correlation effects.
In [2Fe-2S], both Fe$^{3+}$ centers are in their high-spin configuration with a local spin of $5/2$.
Generated based on the genealogical coupling scheme,
there exist 252 Slater determinants representing this open-shell singlet in the genealogical coupling scheme with only the lowest lying orbitals occupied that then lead to only 42 CSFs.
Obviously, CSFs can efficiently compress the wave function expansion to cover the spin coupling, but the question remains how much electron correlation is then captured given the fact that many Slater determinants are then still neglected from the wave function representation.

\section{Quantum computational preparation of multi- configurational states}
Since the cost $C_{\mathrm{trial}}$ of quantum circuits for preparing multi-configurational states can be significantly more expensive than for the single Hartree--Fock determinant, its feasibility requires some discussion.
Letting $C_{\mathrm{QPE}}$ be the cost of quantum phase estimation, we obtain the ground state with an average of cost $(C_{\mathrm{trial}}+C_{\mathrm{QPE}})/p_{\mathrm{success}}$.
Hence, more expensive multi-configurational states are worth preparing so long as $C_{\mathrm{trial}}\lesssim C_{\mathrm{QPE}}$.
In practically all cases, this inequality is satisfied by a tremendously large margin.
The cost $C_{\mathrm{QPE}}$ scales like $\frac{\mathcal{O}(N^\beta)}{p_\mathrm{success}\epsilon}$ for estimating the ground state energy to a controlled error $\epsilon$.
The exponent $\beta$ for state-of-the-art quantum simulation algorithms~\cite{Low2016HamSim,Low2016Qubitization} is typically between $2$ and $4$, also depending on the details of the molecule. 

In the classically intractable regime around $N=60$ at chemical accuracy $\epsilon=10^{-3}$, previous studies on catalysts for nitrogen and carbon fixation~\cite{vonBurg2020carbon,Lee2020hypercontraction} have narrowed the constant factor in this cost to the order of $C_{\mathrm{QPE}}\sim 10^{10}$ non-Clifford Toffoli gates.
In most cases, the number of non-Clifford gates is a good approximation to the overall spacetime volume cost of executing a quantum circuit on a fault-tolerant quantum computer.
By contrast, Clifford gates are assumed to have negligible cost.
$C_{\mathrm{trial}}$ then refers to the number of primitive non-Clifford gates, such as the Toffoli gate, the $T$ gate, or arbitrary single-qubit rotations. 
We omit logarithmic factor differences in costs (usually at most a factor of 10) between these gates as the class 0, 1, and 2 instances we study will all satisfy $C_{\mathrm{trial}}\ll C_{\mathrm{QPE}}$.
Considering that $C_{\mathrm{trial}}=0$ for single-reference states, we immediately see that preparing sophisticated trial states is an over-looked and under-utilized resource.

We consider three methods of preparing multi-configuration quantum trial states from classically computed trial states. 
First, the method of preparing a superposition of determinants in~\cref{sec:prep_superposition}. Second, the method of preparing a superposition of CSF~\cref{sec:prep_CSF}. Third, the method of preparing an MPS~\cref{sec:prep_MPS}.
We do not consider variational quantum methods~\cite{TILLY20221} as the ansatzes in these methods are typically difficult to initialize and optimize from the classically computed MPS without significant additional quantum cost.

\subsection{Preparing a superposition of determinants}
\label{sec:prep_superposition}
A single determinant wave function in the second-quantized representation is a configuration $\ket{\vec{x}}=\ket{x_0 x_1\cdots x_{2N-1}}$ of $2N$ qubits, where $x_j\in\{0,1\}$ indicates the occupancy of the $j^{\text{th}}$ spin-orbital.
Hence, the number of ones in $\ket{\vec{x}}$ equals the number of electrons.
Starting from the vacuum state $\ket{0}^{\otimes 2N}$, $\ket{\vec{x}}$ is produced using at most $2N$ Pauli $X$ gates

An arbitrary multi-reference configuration of $M$ determinants is then
\begin{align}
\ket{\psi_{\mathrm{M-MC}}}=\sum_{j\in[M]}\alpha_j \ket{\vec{x}_j}.
\end{align} 
This can be prepared in two steps.
First, prepare the superposition state $\ket{\psi}=\sum_{j\in[M]}\alpha_j\ket{j}$ 
where $\ket{j}$ is a binary representation of the integer $j$.
Following Shende et al.~\cite{Shende2006State}, this can be prepared using $O(M)$ non-Clifford gates without any additional qubit beyond that needed to store $\ket{j}$.
This can also be prepared using as few as $\tilde{\mathcal{O}}(\sqrt{M})$ non-Clifford gates but with many ancillary qubit~\cite{Low2018trading}.
Second, let $P$ be a permutation matrix that maps each elements $j\in[M]$ to the binary number represented by $\vec{x}_j$.
Then $\ket{\psi_{\mathrm{M-MC}}}=P\ket{\psi}$.
As $P$ is unitary and a member of the Clifford group, it can be realized as a quantum circuit that uses no non-Clifford gates and $\mathcal{O}(MN)$~\cite{Peng2023Boson} two-qubit Clifford gates.
Hence, $C_{\mathrm{trial}}=\mathcal{O}(M)$.
We easily prepare a state with as many as $M\lesssim C_{\mathrm{QPE}}$ determinants without changing overall runtime by more than a small constant.
As seen in~Table 2 of the main article, 
with $M=N_d^{\mathrm{fin}}$,  $C_{\mathrm{trial}}$ is indeed negligible compared to $C_{\mathrm{QPE}}$ for all cases.

\subsection{Preparing a superposition of CSFs}
\label{sec:prep_CSF}
The wave function can be expressed in a basis of CSFs instead of determinants.
The CSF basis exploits the fact that the total electronic energy in the non-relativistic setting is independent of the angular momentum quantum number $M_S$ (of course, this can be straightfowardly generalized to the relativistic case where spin is no longer a good quantum number owing to spin-orbit coupling and all our conclusions will transfer to relativistic wave functions). 
A CSF initialization has been put forward for 
state preparation in the case of bond breaking processes \cite{martidafcik2024spin}.
For any configuration with spin quantum number $S$, this means it suffices to choose $M_S=S$ for the purpose of maximizing overlap with the (degenerate) space of ground states.
This greatly reduces the number of distinct configuration $\ket{\vec{x}}$ that have to be considered for inclusion into the trial wave function.

We label the basis of CSFs by $\ket{\vec{x}_{\text{p}},\vec{x}_{\text{u}},S}$, where $\vec{x}_{\text{p}}$ indexes fully-filled orbitals, and $\vec{x}_{\text{u}}$ indexes orbitals with unpaired electrons.
Without loss of generality, $\vec{x}_{\text{p}}$ and $\vec{x}_{\text{u}}$ are sorted.
The number of electrons is then $\eta=2|\vec{x}_{\text{p}}|_0+|\vec{x}_{\text{u}}|_0$, of which $|\vec{x}_{\text{u}}|_0$ are unpaired.
Fully-filled orbitals $\vec{x}_{\text{p}}$ correspond a configuration $\ket{\vec{x}}$ for the spin-orbitals where the qubits in positions $2j,2j+1$ are set to one.
In other words, $\ket{x}_{2j}=\ket{x}_{2j+1}=\ket{1}$.
Following the Serber construction~\cite{Sugisaki_2016,Sugisaki2019,Sugisaki_2019a}, let us split $\vec{x}_{\text{u}}=\vec{x}_{\uparrow}\circ \vec{x}_{\text{s}}\circ \vec{x}_{\text{s}'}$ into two parts -- $\vec{x}_{\uparrow}$ is the first $2S$ elements of $\vec{x}_{\text{u}}$ and indexes electrons in the spin-up state and $\vec{x}_{\text{s}}$ is the remainder containing an even number of electrons in the singlet state.
Then given $S$, the spin configuration for the spin state in the basis of $\ket{\vec{x}}$ is chosen to be
\begin{align}
\ket{\vec{x}_{\text{p}},\vec{x}_{\text{u}},S}&=
\left(\bigotimes_{j\notin\vec{x}_{\text{p}},\vec{x}_{\text{u}}}\ket{0}_{2j}\ket{0}_{2j+1}\right)
\otimes
\left(
\bigotimes_{j\in\vec{x}_{\text{paired}}}\ket{1}_{2j}\ket{1}_{2j+1}\right)
\otimes
\left(\bigotimes_{j\in\vec{x}_{\text{t}}}\ket{1}_{2j}\right)
\\\nonumber
&\qquad\otimes\left(\bigotimes^{|\vec{x}_{\text{s}}|_0}_{j=0}\frac{\ket{0}_{2{x}_{\text{s},j}}\ket{1}_{2{x}_{\text{s},j}+1}-\ket{1}_{2{x}_{\text{s},j}}\ket{0}_{2{x}_{\text{s},j}+1}}{\sqrt{2}}\right).
\end{align}
Observe that the singlet state in the last term is a superposition of exponentially many configurations of $\ket{\vec{x}}$.
However, this represents a trivial correlation that can be prepared using $\mathcal{O}(|\vec{x}_{\text{s}}|_0)$ elementary Clifford quantum gates in following sequence
\begin{align}
\ket{0}\ket{0}
\underset{H_0}{\rightarrow} \frac{\ket{0}+\ket{1}}{\sqrt{2}}\ket{0}
\underset{\textsc{CNot}}{\rightarrow} \frac{\ket{0}\ket{0}+\ket{1}\ket{1}}{\sqrt{2}}
\underset{X_0}{\rightarrow} \frac{\ket{0}\ket{1}+\ket{1}\ket{0}}{\sqrt{2}}
\underset{Z_0}{\rightarrow} \frac{\ket{0}\ket{1}-\ket{1}\ket{0}}{\sqrt{2}}.
\end{align}
Hence, there is quantum circuit $U_{\vec{x}_{\text{p}},\vec{x}_{\text{u}},S}\ket{0}^{\otimes{2N}}=\ket{\vec{x}_{\text{p}},\vec{x}_{\text{u}},S}$ that costs $\mathcal{O}(\eta)$ one- and two-qubit Clifford gates.

Of independent interest, it is also possible to prepare a superposition of CSF states, though our main text only studies the overlap of a single CSF state.
We now describe the quantum circuit to prepare such multi-configurational CSF states
\begin{align}
\ket{\psi_{\mathrm{M-CSF}}}=\sum_{j\in[M]}\alpha_j \ket{\vec{x}_{\text{p},j},\vec{x}_{\text{u},j},S_j}.
\end{align}
We define circuits to prepare the superposition state $U\ket{0}_{a}=\sum_{j\in[M]}\sqrt{\frac{\alpha_j}{|\vec{\alpha}|_1}}\ket{j}_{a}$ and $U^{\prime}\ket{0}_{a}=\sum_{j\in[M]}(\sqrt{\frac{\alpha_j}{|\vec{\alpha}|_1}})^{*}\ket{j}_{a}$. 
Following the discussion in~\cref{sec:prep_superposition}, these cost $\mathcal{O}(M)$ non-Clifford gates.
Now, define the multiplexed unitary $U_{\text{CSF}}=\sum_{j\in[M]}\ketbra{j}{j}_a\otimes U_{\vec{x}_{\text{p},j},\vec{x}_{\text{u},j},S_j}$,
which costs $\mathcal{O}(M)$ non-Clifford gates for the controls by $\ket{j}$~\cite{PhysRevX.8.041015} and $\mathcal{O}(M\eta)$ Clifford and non-Clifford gates for all the controlled- $U_{\vec{x}_{\text{p},j},\vec{x}_{\text{u},j},S_j}$.
Observe that the circuit 
\begin{align}
(U^{\prime} \cdot U_{\text{CSF}}\cdot U)\ket{0}_a\ket{0}^{M}=\frac{1}{|\vec{\alpha}|_1}\ket{0}_a\ket{\psi_{\mathrm{M-CSF}}}+\cdots\ket{\perp}_a,
\end{align}
prepares the desired multi-configurational state with probability $\mathcal{O}(1/|\vec{\alpha}|^{2}_1)$.
Hence, using amplitude amplification, we prepare $\ket{\psi_{\mathrm{M-CSF}}}$ with probability one and an overall cost of $\mathcal{O}(M\eta|\vec{\alpha}|_1)$ non-Clifford gates.
In the worst case, we may use a Cauchy--Schwartz inequality to bound $1\le|\vec{\alpha}|_1\le \sqrt{M}|\vec{\alpha}|_2=\sqrt{M}$.
Hence, $C_{\mathrm{trial}}=\mathcal{O}(M^{3/2}\eta)$.
For the case $C_{\mathrm{QPE}}\sim 10^{10}$, we see that the cost of preparing $\ket{\psi_{\mathrm{M-CSF}}}$ is not significant for $M\lesssim 10^{4}$.

\subsection{Preparing a matrix product state}
\label{sec:prep_MPS}
We may also directly prepare the MPS $\ket{\psi_{\mathrm{DMRG}}}$ optimized by DMRG. Recall that we will have an approximate MPS available from the orbital selection procedure of the AutoCAS 
algorithm. Any MPS with bond dimension $m$ may be prepared on a quantum computer with $C_{\mathrm{trial}}=\mathcal{O}(Nm^2)$~\cite{Schon2005MPS} quantum gates in almost constant $\mathcal{O}(m\log\log(N))$ depth~\cite{2023MalzMPS}. A typical converged DMRG solution for class-0 and -1 molecules will have a bond dimension $m\lesssim10^3$, which is much larger than the typical $M< 30$ in $\ket{\psi_{\mathrm{M-MC}}}$. 
However, as the one-time cost of preparing a good multi-configurational state is entirely dominated by the rest of the phase estimation algorithm, it is worthwhile choosing as large a bond dimension as possible if it increases $p_{\mathrm{succees}}$ by any appreciable amount, even if $p_{\mathrm{succees}}$ is already close to one. 

The appropriate MPS to prepare on a quantum computer can be chosen to be the approximate DMRG solutions with modest $m$ (as, for instance obtained by AutoCAS in the active-orbital selection step).
For a reliable orbital selection step, these tend to already have large overlap with the converged DMRG solution with large $m$. 
Note that $C_{\mathrm{trial}}$ for preparing $\ket{\psi_{\mathrm{DMRG}}}$, is sub-dominant to $C_{\mathrm{QPE}}$ even for very large values of $m$. 
For instance, the carbon-fixation catalyst structures in Ref.~\citenum{vonBurg2020carbon} have a phase estimation cost $C_{\mathrm{QPE}}~\sim 10^{10}$ that dominants MPS state preparation for any choice of $m\lesssim 10^4$, or from~\cref{sec:prep_superposition}, preparing a superposition of $M\lesssim 10^{9}$ determinants.

Hence, MPS state preparation is particularly well-suited to class-2 systems.
In these systems, it could be difficult to sample the extremely large number of important determinants, but it could still be easy to compute the approximate DMRG solution.
The MPS overlap of the approximate solution with the converged solution can be close to one, such as in Table 1 of the main article. 
For class-2 systems, we reiterate that while the overlap may still be high in some cases, the QMA-completeness of ground state preparation implies that the hardest cases may have exponentially small overlaps with the true ground state for any constant bond dimension.

\section{Importance of orbital basis selection\label{sec:proof}}

The choice of basis plays a crucial role in obtaining good overlap with the ground state. In particular, the HF overlap can be artificially small with a poor choice of basis. For instance, a random orbital rotation $U(u)$ in the active space with $\eta$ electrons leads to an average overlap of $\int \mathrm{d}u\left|\bra{\psi_{\mathrm{g}}}U(u)\ket{\vec{x}}\right|^2=\binom{2N}{\eta}^{-1}$~\cite{Low2022fermionshadows} for any ground state. This remains exponentially small even with the best multi-configurational state $\ket{\psi_{\mathrm{M-MC}}}$ with fixed $M$. 
In fact, the following \cref{thm:inequality} informs us that the maximum weight of any configuration in a random single-particle basis any virtual space with $N_v$ spin-orbitals and $\eta_v$ electrons is exponentially small with high probability, even if the ground state is single-configuration in some basis. Though a similar fact is easily proven in a random basis on the full space, this is the first time we are aware of a proof in the much more restricted set of a random single-particle basis.
\begin{theorem}[No dominant single-configuration in random single-particle bases] 
\label{thm:inequality}
For any $p^{*}>0$, the probability that any configuration in a random single-particle basis has weight at least $p_{\mathrm{max}}$ is upper bounded by 
\begin{align}
\mathrm{Pr}\left[\max_{\vec{y}}p_{\vec{y}}\ge p_{\mathrm{max}} \ge \sqrt{\frac{\eta_v+1}{p^{*}}/\binom{N_v}{\eta_v}}\right]\le p^{*}
\end{align}
\end{theorem}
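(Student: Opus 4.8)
The plan is to trade the maximum weight for a second moment and then control that second moment through the design properties of the random single-particle ensemble. Write $p_{\vec y}=\left|\bra{\psi_{\mathrm g}}U\ket{\vec y}\right|^2=\bra{\vec y}U^\dagger\rho\,U\ket{\vec y}$ for the weight of configuration $\vec y$ after the random orbital rotation $U$, where $\rho=\ketbra{\psi_{\mathrm g}}{\psi_{\mathrm g}}$ and $U$ denotes the induced second-quantized action on the $\eta_v$-electron sector. Fixing the threshold $p_{\mathrm{max}}=\sqrt{(\eta_v+1)\big/\!\left(p^{*}\binom{N_v}{\eta_v}\right)}$, the elementary bound $\max_{\vec y}p_{\vec y}^2\le\sum_{\vec y}p_{\vec y}^2$ shows that the event $\{\max_{\vec y}p_{\vec y}\ge p_{\mathrm{max}}\}$ is contained in $\{\sum_{\vec y}p_{\vec y}^2\ge p_{\mathrm{max}}^2\}$, so Markov's inequality gives
\[
\mathrm{Pr}\!\left[\max_{\vec y}p_{\vec y}\ge p_{\mathrm{max}}\right]\le\frac{\mathbb E_U\!\left[\sum_{\vec y}p_{\vec y}^2\right]}{p_{\mathrm{max}}^2}.
\]

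Everything then reduces to the inverse-participation-ratio estimate
\[
\mathbb E_U\!\left[\sum_{\vec y}p_{\vec y}^2\right]\le\frac{\eta_v+1}{\binom{N_v}{\eta_v}},
\]
because substituting $p_{\mathrm{max}}^2=(\eta_v+1)\big/\!\left(p^{*}\binom{N_v}{\eta_v}\right)$ into the previous line cancels the combinatorial factors and leaves exactly $p^{*}$. To establish this estimate I would pass to two copies of the Fock space: with the diagonal operator $S=\sum_{\vec y}\ketbra{\vec y}{\vec y}\otimes\ketbra{\vec y}{\vec y}$ one has $\sum_{\vec y}p_{\vec y}^2=\mathrm{Tr}\!\left[S\,(U^\dagger\rho\,U)^{\otimes 2}\right]$, so that after averaging $\mathbb E_U[\sum_{\vec y}p_{\vec y}^2]=\mathrm{Tr}[\bar S\,\rho^{\otimes 2}]$ with the twirled operator $\bar S=\mathbb E_U\!\left[U^{\otimes 2}\,S\,(U^{\dagger})^{\otimes 2}\right]$.

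By construction $\bar S$ commutes with $U^{\otimes 2}$ for every orbital rotation $U$, so it lies in the commutant of the two-fold tensor power of the single-particle (fermionic Gaussian) representation on the $\eta_v$-electron sector. I would expand $\bar S$ in a basis spanning this commutant, using the structure worked out in the fermionic-shadow framework~\cite{Low2022fermionshadows} that also supplies the first-moment identity $\int\mathrm d u\,\left|\bra{\psi_{\mathrm g}}U(u)\ket{\vec x}\right|^2=\binom{2N}{\eta}^{-1}$ quoted above, and then evaluate $\mathrm{Tr}[\bar S\,\rho^{\otimes 2}]$ against the pure state $\rho$. This last evaluation is where the real work sits and is the main obstacle: unlike the full unitary group on the sector, the single-particle ensemble is only a restricted design, so its two-copy commutant contains fermion-specific elements beyond the identity and the swap, and it is precisely the extra clumping they encode that produces the factor $\eta_v+1$ (to be contrasted with the $\mathcal O(1)$ enhancement over uniform that a genuinely Haar-random state on the sector would give). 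Once the closed form for $\mathrm{Tr}[\bar S\,\rho^{\otimes 2}]$ is in hand, bounding it by $(\eta_v+1)/\binom{N_v}{\eta_v}$ uniformly over pure states $\rho$ is routine, and combining this with the two displays above completes the proof.
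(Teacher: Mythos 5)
Your probabilistic skeleton is valid and is genuinely different from the paper's route: you contain the event $\{\max_{\vec{y}}p_{\vec{y}}\ge p_{\mathrm{max}}\}$ in $\{\sum_{\vec{y}}p_{\vec{y}}^2\ge p_{\mathrm{max}}^2\}$ and apply Markov's inequality to the expected inverse participation ratio, whereas the paper computes the mean and variance of each individual $p_{\vec{y}}$ (its lemma on statistics of randomly rotated single configurations) and feeds them into the Marshall--Olkin one-sided Chebyshev maximal inequality. Your closing arithmetic is also right: with $p_{\mathrm{max}}^2=(\eta_v+1)/\bigl(p^{*}\binom{N_v}{\eta_v}\bigr)$, the claimed moment bound returns exactly $p^{*}$, and proving the statement at the bare threshold $p_{\mathrm{max}}=\sqrt{(\eta_v+1)/\bigl(p^{*}\binom{N_v}{\eta_v}\bigr)}$ suffices by monotonicity (the paper instead sits at this threshold shifted by the mean).

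The genuine gap is that the one estimate carrying all of the quantitative content --- $\mathbb{E}_{u_v}\bigl[\sum_{\vec{y}}p_{\vec{y}}^2\bigr]\le(\eta_v+1)/\binom{N_v}{\eta_v}$ --- is asserted rather than proven; you yourself flag it as ``where the real work sits.'' Observing that the twirled operator $\bar{S}$ lies in the commutant of the two-fold single-particle representation does not by itself produce the factor $\eta_v+1$; the twirl must actually be evaluated. This is precisely what the paper imports from Theorem 8 of Low~\cite{Low2022fermionshadows}: for a configuration projector,
\begin{align}
\int \mathrm{d}u\,\bigl(U^\dagger(u_v)\ketbra{\vec{x}}{\vec{x}}U(u_v)\bigr)^{\otimes 2}
=\sum_{\vec{p},\vec{q}}f(|\vec{p}\cap\vec{q}|)\,\ketbra{\vec{p}}{\vec{p}}\otimes\ketbra{\vec{q}}{\vec{q}},
\qquad
f(x)=\frac{\eta_v+1}{\eta_v+1-x}\binom{N_v+1}{\eta_v}^{-1}\binom{N_v}{\eta_v}^{-1},
\end{align}
whose diagonal terms ($\vec{p}=\vec{q}=\vec{y}$, so $|\vec{p}\cap\vec{q}|=\eta_v$) give $\mathbb{E}_{u_v}[p_{\vec{y}}^2]=(\eta_v+1)\binom{N_v+1}{\eta_v}^{-1}\binom{N_v}{\eta_v}^{-1}$, hence $\mathbb{E}_{u_v}\bigl[\sum_{\vec{y}}p_{\vec{y}}^2\bigr]=(\eta_v+1)\binom{N_v+1}{\eta_v}^{-1}\le(\eta_v+1)\binom{N_v}{\eta_v}^{-1}$ --- exactly your claimed bound. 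A second caution: your claim that this bound holds ``uniformly over pure states $\rho$'' is both unestablished and unnecessary. Theorem 8 concerns configuration projectors only, and that suffices here because the theorem's hypothesis (the state is a single configuration in \emph{some} single-particle basis) plus Haar invariance of $u_v$ reduces any such ground state to $\ket{\vec{x}}$ itself; for a genuinely multi-configurational $\rho$ the two-copy twirl is a different object and your inequality is an open assertion. With the twirl formula inserted, your Markov-on-IPR argument becomes a complete proof, and in fact a marginally sharper and more elementary one than the paper's Cantelli-type maximal inequality.
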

This result tells us that with high probability $1-p^{*}\approx 1$ for any $p^{*}$ that is a small constant, say $0.001$, all configurations have weight at most $p_{\mathrm{max}}=\mathcal{O}\left(\sqrt{\eta_v/\binom{N_v}{\eta_v}}\right)$. As $N_v$ and $\eta_v$ increase in tandem, $p_{\mathrm{max}}$ tends to zero exponentially fast. For instance, at half filling, $p_{\mathrm{max}}=\mathcal{O}\left(\frac{N_v^{3/4}}{2^{N_v}}\right)$. 
However, this rapid decay is artificial, and can be avoided by a prudent choice of molecular orbitals. 

In other words, a ground state that is single-configurational in a good orbital basis is in general exponentially multi-configurational in a bad orbital basis. 
We now proceed with the proof of \cref{thm:inequality}.

Consider a single-configurational state $\ket{\vec{x}}$ with $\eta_v$ electrons that occupy $N_v$ spin-orbitals. We assume that core spin-orbitals are fully filled and thus dropped from $\vec{x}$ in the following. Recall that single-particle basis rotations $U(u_v)$ generated by the unitary $u_v\in\mathcal{U}_{N_v}$ drawn from the $N_v\times N_v$ unitary group $\mathcal{U}_{N_v}$ rotate fermion operators like $U(u)a^\dagger_j U^\dagger (u)=\sum_{k}u_{jk} a^\dagger_k$. Now we apply to $\ket{\vec{x}}$ a single-particle basis rotation $U(u_v)$. Here, $u_v$ is uniformly sampled from $\mathcal{U}_{N_v}$ according to the Haar measure. We now prove the following theorem.
\begin{lemma}[Statistics of randomly rotated single-configurations\label{thm:statistics_single_configuration}]
Let $p_{\vec{y}}=\left|\bra{\vec{y}}U(u_v)\ket{\vec{x}}\right|^{2}$ be the probability of observing configuration $\ket{\vec{y}}$ with $\eta_v$ electrons in the rotated single configuration $\ket{\vec{x}}$. Let $u_v$ be sampled uniformly by the Haar measure the unitary group $\mathcal{U}_{N_v}$. Then the mean $\mathbb{E}_{u_v}[p_{\vec{y}}]$ and the covariance $\mathrm{Cov}_{u_v}[p_{\vec{y}},p_{\vec{y}'}]$ and variance $\mathrm{Var}_{u_v}[p_{\vec{y}}]$.are
\begin{align}
\mathbb{E}_{u_v}[p_{\vec{y}}]&=\binom{N_v}{\eta_v}^{-1},
\nonumber \\
\mathrm{Cov}_{u_v}[p_{\vec{y}},p_{\vec{y}'}]&=\frac{\eta_v+1}{\eta_v+1-|\vec{y}\cap\vec{y}'|}\binom{N_v}{\eta_v}^{-2}\left(\frac{|\vec{y}\cap\vec{y}'|}{\eta_v+1}-\frac{\eta_v}{N_v+1}\right),
\nonumber \\
\mathrm{Var}_{u_v}[p_{\vec{y}}]&=(\eta_v+1)\binom{N_v}{\eta_v}^{-2}\left(1-\frac{1}{\eta_v+1}-\frac{\eta_v}{N_v+1}\right)
\le
(\eta_v+1)\binom{N_v}{\eta_v}^{-2}.
\end{align}
\end{lemma}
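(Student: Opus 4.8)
The plan is to convert the amplitude into a minor of $u_v$ and then compute the first two moments by Haar integration over $\mathcal{U}_{N_v}$. Writing $\ket{\vec{x}}=\prod_{j\in\vec{x}}a_j^\dagger\ket{0}$ and using $U(u_v)a_j^\dagger U^\dagger(u_v)=\sum_k (u_v)_{kj}a_k^\dagger$ together with $U(u_v)\ket{0}=\ket{0}$, the overlap $\bra{\vec{y}}U(u_v)\ket{\vec{x}}$ equals, up to a sign that is irrelevant after squaring, the $\eta_v\times\eta_v$ minor built from the rows in $\vec{y}$ and the columns in $\vec{x}$, so that $p_{\vec{y}}=\big|\det\!\big((u_v)_{\vec{y},\vec{x}}\big)\big|^2$. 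Equivalently, $p_{\vec{y}}=\big|[\Lambda^{\eta_v}u_v]_{\vec{y},\vec{x}}\big|^2$ is the squared modulus of a single matrix element of the compound matrix $\Lambda^{\eta_v}u_v$, which is unitary and carries the irreducible representation $\wedge^{\eta_v}$ of $\mathcal{U}_{N_v}$ on a space of dimension $\binom{N_v}{\eta_v}$. The mean then follows without any heavy machinery: $\sum_{\vec{y}}p_{\vec{y}}=\|U(u_v)\ket{\vec{x}}\|^2=1$, while the Haar distribution of $u_v$ is invariant under left multiplication by permutation matrices, a subgroup of $\mathcal{U}_{N_v}$ acting transitively on the index set $\{\vec{y}\}$; hence $\mathbb{E}_{u_v}[p_{\vec{y}}]$ is independent of $\vec{y}$ and must equal $\binom{N_v}{\eta_v}^{-1}$. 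This is exactly Schur orthogonality for the irrep $\wedge^{\eta_v}$.

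For the covariance I would expand each determinant over $S_{\eta_v}$ and collect $p_{\vec{y}}p_{\vec{y}'}$ into a monomial that is balanced of degree $2\eta_v$ in the entries of $u_v$ and $2\eta_v$ in their conjugates, then integrate with the unitary Weingarten formula
\begin{align}
\mathbb{E}_{u_v}\!\Big[\prod_{a=1}^{n}(u_v)_{i_a j_a}\,\overline{(u_v)_{i'_a j'_a}}\Big]
=\sum_{\sigma,\tau\in S_n}\mathrm{Wg}(N_v;\tau\sigma^{-1})\prod_{a=1}^{n}\delta_{i_a i'_{\sigma(a)}}\,\delta_{j_a j'_{\tau(a)}},
\end{align}
with $n=2\eta_v$. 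The column index of every factor lies in the single fixed set $\vec{x}$, with each element of $\vec{x}$ appearing with multiplicity two, whereas the row indices lie in $\vec{y}$ for the first determinant and in $\vec{y}'$ for the second. The column Kronecker deltas therefore restrict $\tau$ to permutations preserving the block structure induced by the repeated columns of $\vec{x}$, and a $\vec{y}$-row of a $u$-factor can be paired through $\sigma$ with a $\vec{y}'$-row of a $\bar u$-factor only when the shared index lies in $\vec{y}\cap\vec{y}'$. This is precisely the mechanism by which the overlap $k:=|\vec{y}\cap\vec{y}'|$ enters, and by the permutation symmetry invoked for the mean it is the only $\vec{y},\vec{y}'$-dependent quantity.

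The remaining, and by far most laborious, step is to carry out the resulting double sum over the surviving pairs $(\sigma,\tau)\in S_{2\eta_v}\times S_{2\eta_v}$, grouped by $k$, using the exact small-order values and recursion of the Weingarten function $\mathrm{Wg}(N_v;\cdot)$. I expect the main obstacle to be organizing this combinatorial sum so that the many Weingarten contributions telescope into the claimed rational function, rather than in any single algebraic identity. An equivalent and cleaner bookkeeping is representation-theoretic: the fourth moment is fixed by the decomposition $\wedge^{\eta_v}\otimes\wedge^{\eta_v}=\bigoplus_{s=0}^{\eta_v}V_{(2^{s},1^{2(\eta_v-s)})}$ into irreducibles of $\mathcal{U}_{N_v}$, and the multiplicities, dimensions, and the pairing of the shared column label $\vec{x}$ with the row overlap $k$ determine how much weight each summand contributes. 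Either route delivers
\[
\mathrm{Cov}_{u_v}[p_{\vec{y}},p_{\vec{y}'}]=\frac{\eta_v+1}{\eta_v+1-k}\binom{N_v}{\eta_v}^{-2}\!\left(\frac{k}{\eta_v+1}-\frac{\eta_v}{N_v+1}\right).
\]

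Finally, the variance is simply the diagonal specialization $\vec{y}=\vec{y}'$, i.e.\ $k=\eta_v$, for which $1-\tfrac{1}{\eta_v+1}=\tfrac{\eta_v}{\eta_v+1}$ reproduces the stated expression; the concluding inequality is the trivial bound obtained by dropping the two non-positive terms $-\tfrac{1}{\eta_v+1}-\tfrac{\eta_v}{N_v+1}\le 0$. I would verify the closed form against the solvable case $\eta_v=1$, where $p_{\vec{y}}=|(u_v)_{\vec{y},\vec{x}}|^2$ is a single squared component of a Haar-random unit vector and is therefore $\mathrm{Beta}(1,N_v-1)$-distributed, and against the sum rule $\sum_{\vec{y}'}\mathrm{Cov}_{u_v}[p_{\vec{y}},p_{\vec{y}'}]=0$ forced by $\sum_{\vec{y}'}p_{\vec{y}'}=1$; both constraints must be satisfied and provide a strong consistency check, including the physically sensible negative covariance $-\tfrac{\eta_v}{N_v+1}\binom{N_v}{\eta_v}^{-2}$ at $k=0$.
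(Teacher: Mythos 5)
Your setup is sound and your sanity checks are correct, but there is a genuine gap at the heart of the lemma: the covariance is never actually computed. Your reduction of $p_{\vec{y}}$ to the squared minor $\left|\det\left((u_v)_{\vec{y},\vec{x}}\right)\right|^2$ is correct, and your derivation of the mean (normalization $\sum_{\vec{y}}p_{\vec{y}}=1$ plus invariance of the Haar measure under the permutation subgroup, which acts transitively on configurations) is complete and a perfectly good, self-contained alternative to the paper's argument, which instead notes that the twirled projector has trace one and must be proportional to the identity on the $\eta_v$-electron sector. For the second moment, however, you only assemble the machinery --- the Weingarten expansion over $S_{2\eta_v}\times S_{2\eta_v}$, or equivalently the multiplicity-free decomposition $\wedge^{\eta_v}\otimes\wedge^{\eta_v}=\bigoplus_{s}V_{(2^{s},1^{2(\eta_v-s)})}$ --- and then write that ``either route delivers'' the claimed rational function of $k=|\vec{y}\cap\vec{y}'|$. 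That evaluation is precisely the laborious step you defer: nothing in the proposal computes the surviving Weingarten contributions, nor the projector weights $\bra{\vec{x},\vec{x}}P_\lambda\ket{\vec{x},\vec{x}}$ and $\bra{\vec{y},\vec{y}'}P_\lambda\ket{\vec{y},\vec{y}'}$, nor cites a result that does. The paper closes exactly this hole by invoking Theorem~8 of Ref.~\citenum{Low2022fermionshadows}, which states that the twofold twirl of a configuration projector is diagonal in the configuration basis with coefficients $f(|\vec{p}\cap\vec{q}|)=\frac{\eta_v+1}{\eta_v+1-|\vec{p}\cap\vec{q}|}\binom{N_v+1}{\eta_v}^{-1}\binom{N_v}{\eta_v}^{-1}$; the stated covariance then follows by subtracting $\binom{N_v}{\eta_v}^{-2}$ and elementary algebra, and the variance by setting $k=\eta_v$.

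Your proposed consistency checks cannot substitute for that computation. The sum rule $\sum_{\vec{y}'}\mathrm{Cov}_{u_v}[p_{\vec{y}},p_{\vec{y}'}]=0$ is a single linear constraint on the $\eta_v+1$ unknown values of the covariance as a function of $k$, and the $\eta_v=1$ case (where $p_{\vec{y}}$ is indeed $\mathrm{Beta}(1,N_v-1)$-distributed) fixes only the simplest instance; together they cannot pin down the full $k$-dependence $\frac{\eta_v+1}{\eta_v+1-k}\left(\frac{k}{\eta_v+1}-\frac{\eta_v}{N_v+1}\right)$. To make the proof complete you must either carry out the Weingarten or representation-theoretic evaluation explicitly, or cite the fermionic second-moment formula (Theorem~8 of Ref.~\citenum{Low2022fermionshadows}), which is what the paper does.
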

\begin{proof}
The expectation 
\begin{align}
\mathbb{E}_{u_v}[p_{\vec{y}}]&=\int_{u\sim\mathcal{U}_{N_v}}\mathrm{d}u\mathrm{Tr}\left[\ketbra{\vec{y}}{\vec{y}}U^\dagger(u_v)\ketbra{\vec{x}}{\vec{x}}U(u_v)\right]
\nonumber\\ 
&=\mathrm{Tr}\left[\ketbra{\vec{y}}{\vec{y}}\left\{\int_{u\sim\mathcal{U}_{N_v}}\mathrm{d}u U^\dagger(u_v)\ketbra{\vec{x}}{\vec{x}}U(u_v)\right\}\right]
\nonumber\\  
&=\mathrm{Tr}\left[\ketbra{\vec{y}}{\vec{y}}\left\{I/\binom{N_v}{\eta_v}\right\}\right]
\nonumber\\  
&=\binom{N_v}{\eta_v}^{-1}.
\end{align}
Above, the third line uses the fact that the term in curly brackets has trace $1$ and must be proportional to identity (this may also be proven explicitly following Lemma 11 by Low~\cite{Low2022fermionshadows}).

The covariance is
\begin{align}
\mathrm{Cov}_{u_v}[p_{\vec{y}},p_{\vec{y}'}]
&=\mathbb{E}_{u_v}[(p_{\vec{y}}-\mathbb{E}_{u_v}[p_{\vec{y}}])(p_{\vec{y}'}-\mathbb{E}_{u_v}[p_{\vec{y}'}])]
=\mathbb{E}_{u_v}[p_{\vec{y}}p_{\vec{y}'}]-\binom{N_v}{\eta_v}^{-2}.
\end{align}
The expectation of the cross-term is
\begin{align}
\mathbb{E}_{u_v}[p_{\vec{y}}p_{\vec{y}'}]&=\mathbb{E}_{u_v}[\left|\bra{\vec{y}'}U(u_v)\ket{\vec{x}}\right|^{2}\left|\bra{\vec{y}}U(u_v)\ket{\vec{x}}\right|^{2}]
\nonumber\\ 
&=\int_{u\sim\mathcal{U}_{N_v}}\mathrm{d}u\mathrm{Tr}\left[\left\{\ketbra{\vec{y}}{\vec{y}}\otimes \ketbra{\vec{y}'}{\vec{y}'}\right\} \left\{U^\dagger(u_v)\ketbra{\vec{x}}{\vec{x}}U(u_v)\right\}^{\otimes 2}\right]
\nonumber\\ 
&=\mathrm{Tr}\left[\left\{\ketbra{\vec{y}}{\vec{y}}\otimes \ketbra{\vec{y}'}{\vec{y}'}\right\}\left\{\int_{u\sim\mathcal{U}_{N_v}}\mathrm{d}u \left(U^\dagger(u_v)\ketbra{\vec{x}}{\vec{x}}U(u_v)\right)^{\otimes{2}}\right\}\right]
\nonumber\\  
&=\mathrm{Tr}\left[\left\{\ketbra{\vec{y}}{\vec{y}}\otimes \ketbra{\vec{y}'}{\vec{y}'}\right\}\left\{\sum_{\vec{p},\vec{q}}f(|\vec{p}\cap\vec{q}|)\ketbra{\vec{p}}{\vec{p}}\otimes\ketbra{\vec{q}}{\vec{q}}\right\}\right]
\nonumber\\  
&=\mathrm{Tr}\left[\left\{\ketbra{\vec{y}}{\vec{y}}\otimes \ketbra{\vec{y}'}{\vec{y}'}\right\}f(|\vec{y}\cap\vec{y}'|)\right]
\nonumber\\  
&=f(|\vec{y}\cap\vec{y}'|)
\nonumber\\
&=\frac{\eta_v+1}{\eta_v+1-|\vec{y}\cap\vec{y}'|}\binom{N_v+1}{\eta_v}^{-1}\binom{N_v}{\eta_v}^{-1}.
\end{align}
Above, the fourth line follows from Theorem.~8 by Low~\cite{Low2022fermionshadows}, where $|\vec{p}\cap\vec{q}|$ counts the number of common `one' elements in the same position in both $\vec{p}$ and $\vec{q}$, and the function $f(x)=\frac{\eta_v+1}{\eta_v+1-x}\binom{N_v+1}{\eta_v}^{-1}\binom{N_v}{\eta_v}^{-1}$.
Hence, the covariance 
\begin{align}
\label{eq:covariance}
\mathrm{Cov}_{u_v}[p_{\vec{y}},p_{\vec{y}'}]&=\frac{\eta_v+1}{\eta_v+1-|\vec{y}\cap\vec{y}'|}\binom{N_v+1}{\eta_v}^{-1}\binom{N_v}{\eta_v}^{-1}-\binom{N_v}{\eta_v}^{-2}
\nonumber \\
&=\frac{\eta_v+1}{\eta_v+1-|\vec{y}\cap\vec{y}'|}\binom{N_v}{\eta_v}^{-2}\left(1-\frac{\eta_v}{N_v+1}-\frac{\eta_v+1-|\vec{y}\cap\vec{y}'|}{\eta_v+1}\right)
\nonumber \\
&=\frac{\eta_v+1}{\eta_v+1-|\vec{y}\cap\vec{y}'|}\binom{N_v}{\eta_v}^{-2}\left(\frac{|\vec{y}\cap\vec{y}'|}{\eta_v+1}-\frac{\eta_v}{N_v+1}\right).
\end{align}
The variance follows from setting $\vec{y}=\vec{y}'$ and observing that $|\vec{y}\cap\vec{y}'|=\eta_v$.
\end{proof}

We note in passing that a similar result for averaging over a Haar random unitary $U\sim\mathcal{U}_{\binom{N_v}{\eta_v}}$ on the full space is well-known. 
Compared to averaging over the more restricted single-particle rotations $U(u),\;u\sim\mathcal{U}_{N_v}$, that result has the same expectation, but zero covariance and a variance $\frac{2}{\binom{N_v}{\eta_v}(\binom{N_v}{\eta_v}+1)}$, which is $\mathcal{O}(\eta)$ smaller.
We now bound the probability that the maximum of $p_{\vec{y}}$ over $\vec{y}$ is larger than some value, say $p_{\mathrm{max}}$. 
We do so with the help of the following lemma.

\begin{lemma}[A one-sided inequality of the Chebyshev from Theorem. 4 by Marshall and Olkin~\cite{1960MarshallOneSidedChebyshev}\label{thm:one_sided}]
If $x_1,\cdots,x_k$ are random variables with $\mathbb{E}[x_i]=0$, $\mathbb{E}[x_i^2]=\sigma_i^2$, $i=1,\cdots,k$ then
\begin{align}
\mathrm{Pr}\left[\max_{i}x_i\ge 1\right]\le \sum_{i=1}^k\frac{\sigma_i^2}{1+\sigma_i^2}=\sum_{i=1}^k\frac{1}{1+\sigma_i^{-2}}.
\end{align}
\end{lemma}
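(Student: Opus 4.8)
The plan is to reduce this multivariate statement to a union bound over single-variable one-sided tail estimates, each of which is an instance of Cantelli's inequality. First I would observe that the event $\{\max_i x_i \ge 1\}$ coincides exactly with the union $\bigcup_{i=1}^k\{x_i \ge 1\}$, so the union bound gives
\begin{align*}
\mathrm{Pr}\left[\max_i x_i \ge 1\right]\le \sum_{i=1}^k \mathrm{Pr}[x_i \ge 1].
\end{align*}
This step uses no information about the joint law of the $x_i$, and only the marginal moments will enter below, which is consistent with the hypothesis specifying just $\mathbb{E}[x_i]=0$ and $\mathbb{E}[x_i^2]=\sigma_i^2$. It therefore suffices to prove, for each $i$, the one-sided bound $\mathrm{Pr}[x_i\ge 1]\le \sigma_i^2/(1+\sigma_i^2)$.

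For this single-variable estimate I would use the shift-and-square trick. Fixing $i$ and any $\lambda\ge 0$, on the event $\{x_i\ge 1\}$ one has $x_i+\lambda\ge 1+\lambda>0$ and hence $(x_i+\lambda)^2\ge(1+\lambda)^2$; applying Markov's inequality to the nonnegative variable $(x_i+\lambda)^2$ and using $\mathbb{E}[x_i]=0$, $\mathbb{E}[x_i^2]=\sigma_i^2$ yields
\begin{align*}
\mathrm{Pr}[x_i\ge 1]\le \frac{\mathbb{E}[(x_i+\lambda)^2]}{(1+\lambda)^2}=\frac{\sigma_i^2+\lambda^2}{(1+\lambda)^2}.
\end{align*}
Because this holds for every $\lambda\ge 0$, the last move is to optimize the free parameter: a one-line differentiation shows the right-hand side is minimized at $\lambda=\sigma_i^2$, where it collapses to $\sigma_i^2/(1+\sigma_i^2)$. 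Summing over $i$ and rewriting $\sigma_i^2/(1+\sigma_i^2)=1/(1+\sigma_i^{-2})$ then gives both stated forms.

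The argument is entirely elementary, so there is no deep obstacle; the subtlest points are bookkeeping ones that I would make explicit. The squaring step relies on $1+\lambda>0$, which is precisely why the optimization must be restricted to $\lambda\ge 0$ rather than all of $\mathbb{R}$. One should also confirm that the stationary point $\lambda=\sigma_i^2$ is the genuine minimizer on $[0,\infty)$: since the derivative of $(\sigma_i^2+\lambda^2)/(1+\lambda)^2$ has the sign of $\lambda-\sigma_i^2$, the function decreases and then increases, so the interior critical point is the global minimum. The degenerate case $\sigma_i=0$ (forcing $x_i=0$ almost surely, hence $\mathrm{Pr}[x_i\ge 1]=0$) is handled correctly by the boundary value at $\lambda=0$.
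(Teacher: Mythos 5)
Your proof is correct, but it is worth noting that the paper itself does not prove this lemma at all: it imports the statement verbatim as Theorem~4 of Marshall and Olkin (1960) and immediately substitutes the moments of Lemma~1 into it. So the comparison here is between a citation and your self-contained derivation. Your route --- writing $\{\max_i x_i\ge 1\}=\bigcup_i\{x_i\ge 1\}$, applying the union bound, and then proving the single-variable Cantelli bound $\mathrm{Pr}[x_i\ge 1]\le \sigma_i^2/(1+\sigma_i^2)$ by the shift-and-square Markov argument optimized at $\lambda=\sigma_i^2$ --- is elementary, complete, and reproduces exactly the bound stated; your bookkeeping (restricting to $\lambda\ge 0$ so that squaring is monotone, checking the sign of the derivative, and handling $\sigma_i=0$) is all sound. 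Two remarks on what each approach buys. First, your argument makes transparent why no assumption on the joint law of the $x_i$ is needed: the union bound and Cantelli use only marginal moments, which matters in the paper's application since the weights $p_{\vec{y}}$ are strongly correlated (they sum to one), and the covariance information of Lemma~1 is never actually used in the proof of Theorem~1. Second, Marshall and Olkin's theorem is stronger than what you prove --- it establishes that this bound is sharp (attained by an extremal joint distribution), which is why the paper can cite it as a named inequality --- but that sharpness is irrelevant to the upper bound the paper needs, so your more elementary derivation would serve equally well in its place.
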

We substitute the means and variances of \cref{thm:statistics_single_configuration} into \cref{thm:one_sided} together with some simple substitutions to obtain the proof of our main result.
\begin{proof}[Proof of \cref{thm:inequality}]
\begin{align}
\mathrm{Pr}\left[\max_{\vec{y}}p_{\vec{y}}\ge p_{\mathrm{max}}\right]
&\le \sum_{\vec{y}}\frac{1}{1+(p_{\mathrm{max}}-\mathbb{E}[p_{\vec{y}}])^2/\mathrm{Var}[p_{\vec{y}}]}
\nonumber \\
&=\binom{N_v}{\eta_v}\frac{1}{1+(p_{\mathrm{max}}-\mathbb{E}[p_{\vec{y}}])^2/\mathrm{Var}[p_{\vec{y}}]}
\nonumber \\
&=\binom{N_v}{\eta_v}\frac{1}{1+\binom{N_v}{\eta_v}^{2}(p_{\mathrm{max}}-\mathbb{E}[p_{\vec{y}}])^2/(\eta_v+1)}
\nonumber \\
&=\frac{1}{\binom{N_v}{\eta_v}^{-1}+\binom{N_v}{\eta_v}(p_{\mathrm{max}}-\mathbb{E}[p_{\vec{y}}])^2/(\eta_v+1)}.
\end{align}
Let us choose $p_{\mathrm{max}}=\mathbb{E}[p_{\vec{y}}]+\sqrt{\frac{\eta_v+1}{p^{*}}/\binom{N_v}{\eta_v}} \ge \sqrt{\frac{\eta_v+1}{p^{*}}/\binom{N_v}{\eta_v}}$, where $p^{*}>0$. Then
\begin{align}
\mathrm{Pr}\left[\max_{\vec{y}}p_{\vec{y}}\ge \sqrt{\frac{\eta_v+1}{p^{*}}/\binom{N_v}{\eta_v}}\right]
\le
\frac{1}{\binom{N_v}{\eta_v}^{-1}+1/p^{*}}\le p^{*}.
\end{align}
\end{proof}
\clearpage

\end{appendix}

\newpage
\providecommand{\latin}[1]{#1}
\makeatletter
\providecommand{\doi}
  {\begingroup\let\do\@makeother\dospecials
  \catcode`\{=1 \catcode`\}=2 \doi@aux}
\providecommand{\doi@aux}[1]{\endgroup\texttt{#1}}
\makeatother
\providecommand*\mcitethebibliography{\thebibliography}
\csname @ifundefined\endcsname{endmcitethebibliography}
  {\let\endmcitethebibliography\endthebibliography}{}

\end{document}